\DeclareMathOperator{\reals}{\mathbb{R}}
\DeclareMathOperator{\comp}{comp}
\DeclareMathOperator{\est}{est}
\DeclareMathOperator{\umax}{\mathit{u_{max}}}
\DeclareMathOperator{\opt}{OPT}
\DeclareMathOperator{\optlp}{OPT_{LP}}
\theoremstyle{definition}
\newtheorem{defn}{Definition}
\newtheorem{thm}[defn]{Theorem}
\newtheorem{lem}[defn]{Lemma}
\newtheorem{cor}[defn]{Corollary}
\newtheorem{observ}[defn]{Observation}
\newtheorem*{rem}{Remark}
\newtheorem*{claim}{Claim}
\title{A Fast 3-Approximation for the Capacitated Tree Cover Problem with Edge Loads}
\author{Benjamin Rockel-Wolff}
\date{Research Institute for Discrete Mathematics, University of Bonn}
\begin{document}
\maketitle

\begin{abstract}
 The capacitated tree cover problem with edge loads is a variant of the tree cover problem, where we are given facility opening costs, edge costs and loads, as well as vertex loads. We try to find a tree cover of minimum cost such that the total edge and vertex load of each tree does not exceed a given bound. We present an $\mathcal{O}(m\log n)$ time 3-approximation algorithm for this problem.
 
 This is achieved by starting with a certain LP formulation. We give a combinatorial algorithm that solves the LP optimally in time $\mathcal{O}(m\log n)$. Then, we show that a linear time rounding and splitting technique leads to an integral solution that costs at most 3 times as much as the LP solution. Finally, we prove that the integrality gap of the LP is $3$, which shows that we can not improve the rounding step in general.
\end{abstract}

\section{Introduction}

Graph cover problems deal with the following base problem. Given a graph $G$, the task is to find a set of  (connected) subgraphs of $G$, the cover, such that each vertex of $G$ is contained in at least one of the subgraphs. Usually, the subgraphs are restricted to some class of graphs, like paths, cycles or trees. Different restrictions can be imposed on the subgraphs, like a maximum number of edges, or a total weight of the nodes for some given node weights. Recently, Schwartz \cite{schwartz22} published an overview of the literature on different covering and partitioning problems.

We consider the capacitated tree cover problem with edge loads. It is a variation of the tree cover problem that has not been studied so far to the best of our knowledge.

In the capacitated tree cover problem with edge loads, we are given a complete graph $G=(V,E)$, metric edge costs $c:E\rightarrow\reals_+$, vertex loads $b:V\rightarrow [0,1)$, metric edge loads $u:E\rightarrow\reals_{\geq0}$ with $u(e)< u(f)\Rightarrow c(e)\leq c(f)$, and a facility opening cost $\gamma\geq 0$. The task is to find a number of components $k\in\mathbb{N}_{\geq1}$ and a forest $F$ in $G$ consisting of $k$ trees minimizing
\[\sum\limits_{e\in E(F)}c(e)+\gamma k,\] 
such that each tree $T_i$ has total load
\[u(T_i):=\sum_{e\in E(T_i)}u(e)+\sum_{v\in V(T_i)}b(v)\leq1.\]

The capacitated tree cover problem with edge loads is closely related to the facility location problem with service capacities discussed by Maßberg and Vygen in \cite{massberg05}. Their problem uses Steiner trees to connect the nodes, not spanning trees. Furthermore, in their case edge cost and edge load are the same. They make use of this fact to prove a lower bound on the value of an optimum solution. Both problems have important practical applications in chip design. In \cite{held2011} they are called the sink clustering problem and used for clock tree construction. In \cite{bartoschek2014} they are used for repeater tree construction. In these applications terminals and edges have an electrical capacitance. A source can drive only a limited capacitance.

Our problem is also related to other facility location and clustering problems, like the (capacitated) k-center problem (\cite{hochbaum1985}, \cite{khuller2000}) or the k-means problem (\cite{kanungo2002}, \cite{lloyd1982}).

Other tree cover problems include the $k$-min-max tree cover problem and the bounded tree cover problem (\cite{arkin06}, \cite{even04}, \cite{khani14}). In the $k$-min-max tree cover problem, we are given edge weights and want to find $k$ trees such that the maximum of the total weights of the trees is minimized. In the bounded tree cover problem, we are given a bound on the maximum weight of a tree in the cover and try to minimize the number of trees that are required. For these problems Khani and Salavatipour \cite{khani14} gave a 3- and 2.5-approximation respectively. They improve over the previously best known results by Arkin et al.\ \cite{arkin06}, who presented a 4-approximation algorithm for the min-max tree cover problem and a 3-approximation algorithm for the bounded tree cover problem. Even et al.\ \cite{even04} independently gave a 4-approximation algorithm for the min-max tree cover problem.
Furthermore, a rooted version of these problems has been studied. The best known approximation ratio for the capacitated tree case is $7$ and was developed by Yu and Liu \cite{yu19}.

Many algorithms for cycle cover problems are also based on tree cover algorithms (\cite{even04}, \cite{troebst22}, \cite{xu2012}). An example is the capacitated cycle covering problem, where the cover consists of cycles (and singletons) and are given an upper bound on the total nodeweight of the cycles. The task is to minimize the total weight of the cycles plus the facility opening costs. Traub and Tröbst \cite{troebst22} presented a $2+\frac{2}{7}$-approximation for this problem. They use an algorithm for the capacitated tree cover problem as a basis for their $2+\frac{2}{7}$-approximation. In particular, they present a $2$-approximation for the capacitated tree cover problem without edge loads.

\section{Our contribution}
In Section \ref{sec:lp}, we present an LP formulation of the capacitated tree cover problem with edge loads that is based on the formulation in \cite{troebst22}.

Then, we will present a combinatorial algorithm that can optimally solve the LP in time $\mathcal{O}(m \log n)$ in Section \ref{sec:lp_algo}, where $n$ is the number of vertices and $m$ is the number of edges of the graph.

Next, we show how to round the solution to an integral solution in Section \ref{sec:rounding}, employing a splitting technique that runs in linear time from \cite{massberg05}, and show that the resulting integral solution costs at most 3 times as much as the LP-solution. This proves our main theorem:
\begin{thm}
 There is a 3-approximation algorithm for the capacitated tree cover problem with edge loads that runs in time $\mathcal{O}(m\log n)$.
\end{thm}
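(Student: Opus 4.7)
The plan is to assemble the three ingredients that the section outline promises. First, I would introduce the LP relaxation from Section \ref{sec:lp} and argue that its optimum value $\optlp$ is a valid lower bound on $\opt$ (the cost of any feasible integral tree cover); this is immediate once the LP is shown to be a relaxation of the integer program encoding the problem. Second, I would invoke the combinatorial algorithm of Section \ref{sec:lp_algo} to compute an optimal fractional solution in time $\mathcal{O}(m\log n)$. Third, I would apply the rounding procedure of Section \ref{sec:rounding}, which transforms the LP solution into a feasible integral tree cover of total cost at most $3\,\optlp$. Chaining these three facts yields $\sum_{e\in E(F)}c(e) + \gamma k \leq 3\,\optlp \leq 3\,\opt$, which is the claimed approximation ratio; feasibility of the output, i.e.\ $u(T_i)\leq 1$ for every produced tree $T_i$, is exactly what the splitting step is designed to guarantee.

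For the running time, the LP solver costs $\mathcal{O}(m\log n)$, and the rounding step runs in linear time thanks to the splitting technique of Maßberg and Vygen \cite{massberg05}. Hence the LP phase dominates, and the overall algorithm matches the claimed $\mathcal{O}(m\log n)$ bound. No further overhead is introduced, since the rounding consumes the support of the LP solution directly rather than re-examining all of $E$.

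The main obstacle I expect is the analysis of the rounding step. The original Maßberg-Vygen splitting was tailored to Steiner trees where edge cost and edge load coincide; here they can differ, linked only by the monotonicity assumption $u(e) < u(f) \Rightarrow c(e) \leq c(f)$. I would aim to decompose the factor of $3$ into three cleanly identified contributions: the LP cost of the fractional edges that the rounding keeps, the facility-opening overhead incurred when fractional openings are rounded up to one, and a charging argument bounding the extra edge or opening cost introduced when a component whose total fractional load exceeds $1$ must be split into several integral pieces. Because the paper separately establishes that the integrality gap of the LP is exactly $3$, any such analysis is forced to be tight, and the real challenge is to distribute the loss cleanly across those three sources using only the cost-load monotonicity rather than any stronger structural property of $u$ and $c$.
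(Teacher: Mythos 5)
Your high-level skeleton matches the paper exactly: establish that the LP of Section \ref{sec:lp} is a relaxation (so $\optlp \leq \opt$), solve the LP combinatorially in $\mathcal{O}(m\log n)$ time via Algorithm \ref{algo:lpsolve}, and round to an integral solution of cost at most $3\,\optlp$ using linear-time splitting. The running-time accounting (sorting dominates; the support of the LP solution is a forest, hence $\mathcal{O}(n)$ edges to round and split) is also right.

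However, the substance of the theorem is the rounding analysis, and there you have a genuine gap rather than a proof. Your envisioned three-way decomposition --- LP cost of kept edges, ``facility-opening overhead incurred when fractional openings are rounded up to one,'' and a charging argument for splitting --- does not match how the loss is actually distributed, and the second item is not even a quantity in this LP: there are no opening variables; the number of components is implicitly $|V| - x(E)$. What the paper actually does is fix a threshold $\alpha = \tfrac{2}{3}$, round up precisely the edges with $x(e)\geq\alpha$ (so edge cost increases by at most $1/\alpha = \tfrac{3}{2}$, and the splitting step only decreases edge cost), and then separately prove that the number of components after rounding and splitting is at most $3(|V|-x(E))$. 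That second bound is the crux, and it requires machinery you do not supply: the classification of active edges into \emph{seed} and \emph{extension} edges based on the laminar family $\mathcal{A}$ produced by the algorithm (Section \ref{subsec:lp_structure}), the tightness facts from Lemma \ref{lem:tight_edges} and Corollary \ref{cor:tight}, and a per-edge contribution $\est(e)$ that is lower-bounded case by case (inside a tree, filler edge, incident to singletons/good trees, etc.) and summed using $b(V(G)) + u_x(E(G)) \leq |V(G)| - x(E(G))$. Without the structural preprocessing (contracting maximal tight-edge subtrees so that every remaining active edge is a seed or extension edge) and the explicit $\est$ accounting, there is no route to the factor $3$; the cost-load monotonicity $u(e) < u(f) \Rightarrow c(e)\leq c(f)$, which you identify as the key hypothesis, is actually used earlier, in the optimality proof of Algorithm \ref{algo:lpsolve}, not in the rounding bound. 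In short: correct scaffolding, but the load-bearing lemma is missing.
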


While the overall approach is similar to the one used in \cite{troebst22}, edges with load require a different algorithm for solving the LP. Furthermore,  we need to be more careful in the analysis of our rounding step.

Finally, in Section \ref{sec:integrality_gap}, we will give an example proving that the integrality gap of our LP is at least 3.

\section{The LP-formulation}
\label{sec:lp}
We may assume that $\gamma\geq c(e)$ for all $e\in E$, as an edge with $c(e)>\gamma$ will never be used in an optimum solution (and could be removed from the solution of the algorithm without increasing the cost).

For simplicity, we will introduce some notation here: For any function $f:A\rightarrow B\subseteq \reals$  from a finite set $A$ into a set $B\subseteq\reals$ and $X\subseteq A$ we write $f(X):=\sum_{x\in X}f(x)$. \\

Given a solution $F$ to our problem with $k$ components $\{T_1,\ldots,T_k\}$, we know that each tree $T_i$ contains exactly $|V(T_i)|-1$ edges and hence $k=|V|-|E(F)|$. Each induced subgraph of $F$ is a forest. So we know \[|E(F[A])|\leq|A|-1\text{ for each }A\subseteq V.\]
Let us now consider the load on the subgraph of $F$, induced by $A\subseteq V$. Each connected component in $F[A]$ can have load at most $1$. So there must be at least $b(A)+u(E(F[A]))$ components in $F[A]$. As each of the components is a tree, the inequality \[|E(F[A])|\leq|A|-(b(A)+u(E(F[A])))\] must be fulfilled.
Using these properties, we can formulate the following LP relaxation of this problem:

\begin{align}
 \min && c^tx+\gamma(|V|-x(E))\\\notag{}\\
 \text{s.t.} &&x(E(G[A]))&\leq|A|-1&\text{ for each }A\subseteq V\\
             &&\sum\limits_{e\in E(G[A])}(1+u(e))x(e)&\leq|A|-b(A)&\text{ for each }A\subseteq V\\
             &&0\leq x(e)&\leq1 &\text{ for each }e\in E
\end{align}
Here $x(e)$ denotes the fractional usage of the edge $e$. We will call an edge $e$ \emph{active} if $x(e)>0$.
The LP can be reformulated by using variables $y(e):=x(e)(1+u(e))$:
\begin{align}
 \min\hspace{0.3cm} && \sum_{e\in E}\frac{c(e)}{1+u(e)}y(e)&+\gamma\left(|V|-\sum_{e\in E}\frac{y(e)}{1+u(e)}\right)\\\notag{}\\
 \text{s.t.}\hspace{0.3cm}&&\sum\limits_{e\in E(A)}\frac{y(e)}{1+u(e)}&\leq|A|-1&\text{ for each }A\subseteq V\label{ineq:forest}\\
 && y(E(G[A]))&\leq|A|-b(A)&\text{ for each }A\subseteq V\label{ineq:load}\\
 &&0\leq y(e)&\leq1+u(e)&\text{ for each }e\in E\label{ineq:bounds}\
\end{align}

For simplicity, we will always consider solutions $x,y$ of both LPs at once. In the following, we will denote by $u_x(e):=x(e)\cdot u(e)$ the fractional load of edge $e$.

\begin{defn}
For a solution $x,y$ to the LP, we define the \emph{support graph} $G_x:=(V,\{e\in E\mid x(e)>0\})$, i.e. the graph consisting of the vertices $V$ and all active edges.

We call an edge \emph{tight} if $y(e)=1+u(e)$, and we call a set $A\subseteq V$ of vertices \emph{tight} if inequality (\ref{ineq:load}) is tight.
\end{defn}
Our goal will be to solve the LP exactly and then round to a forest that may violate the capacity constraints. This increases the edge cost by at most a factor of $2$. In a final step each tree $T$ in the forest with a load $b(V(T))+u(E(T))>1$ can be split into at most $2\cdot\left(b(V(T))+u(E(T))\right)$ trees. This may decrease the edge cost, but loses a factor of $3$ in the number of components, compared to the LP solution.

\section{Solving the LP}
\label{sec:lp_algo}
Altough the LP has an exponential number of inequalities, we can solve it using a simple greedy algorithm, shown in Algorithm \ref{algo:lpsolve}. We will focus on solving the second LP (5) -- (8).

As a first step, we sort the edges $\{e_1,\ldots,e_m\}=E(G)$ such that \[\frac{c(e_1)-\gamma}{1+u(e_1)}\leq\ldots\leq\frac{c(e_m)-\gamma}{1+u(e_m)}.\]
In each iteration, we compute a partition $\mathcal{A}_i\subset2^{V(G)}$ of the vertices of $G$, based on the previous partition $\mathcal{A}_{i-1}$. We initialize $y$ to $0$ and start with $\mathcal{A}_0:=\{\{v\}|v\in V(G)\}$.
Then we iterate through the edges from $e_1$ to $e_m$. For each edge $e_i$, we do the following:

If $e_i$ has endpoints in two different sets of the partition $A_i^1, A_i^2\in\mathcal{A}_{i-1}$, we increase $y(e_i)$ to the maximum possible value. This maximum value is the sum of the slacks of inequalities (\ref{ineq:load}) for the sets $A_i^1$ and $A_i^2$: $|A_i^1|-b(A_i^1)-y(E(G[A_i^1]))+|A_i^2|-b(A_i^2)-y(E(G[A_i^2]))$. However, we assign at most $1+u(e_i)$, such that we do not violate inequality (\ref{ineq:bounds}).
Finally, if we increased $y(e_i)$ by a positive amount, we create the new partition $\mathcal{A}_i$ that arises from $\mathcal{A}_{i-1}$ by removing $A_i^1$ and $A_i^2$ and adding their union.

We set $\mathcal{A}:=\bigcup_{i=1,\ldots,m} \mathcal{A}_i$. Observe that $\mathcal{A}$ is a laminar family. This guarantees that the support graph is always a forest and inequality (\ref{ineq:forest}) is automatically fulfilled.

\begin{algorithm}
\SetKwInOut{Input}{Input}\SetKwInOut{Output}{Output}
\Input{$G,c,u$.}
\Output{$y$ optimum solution of the LP (5) -- (8).}
\BlankLine
 Sort edges such that $\frac{c(e_1)-\gamma}{1+u(e_1)}\leq\ldots\leq \frac{c(e_m)-\gamma}{1+u(e_m)}$\;
 Set $\mathcal{A}_0:=\{\{v\}|v\in V(G)\}$ and $y:=0$\;
 \For{$i=1\ldots m$}{
    \If{there are sets $A_i^1,A_i^2\in \mathcal{A}_{i-1}$ with $e_i\cap A_i^1\neq\emptyset$, $e_i\cap A_i^2\neq\emptyset$ and $A_i^1\neq A_i^2$}{
        $y(e_i):=\min\{1+u(e_i),|A_i^1|-b(A_i^1)-y(E(A_i^1))+|A_i^2|-b(A_i^2)-y(E(A_i^2))\}$\;
        \If{$y(e_i)>0$}{
            $\mathcal{A}_i:=(\mathcal{A}_{i-1}\setminus\{A_i^1,A_i^2\})\cup\{A_i^1\cup A_i^2\}$\;
        }
        \Else{
            $\mathcal{A}_i:=\mathcal{A}_{i-1}$
        }
    }
 }
 \caption{Algorithm for solving the LP (5) -- (8).}
 \label{algo:lpsolve}
\end{algorithm}

\begin{lem}
\label{lem:tight_edges}
Let $x,y$ be the solution computed by Algorithm \ref{algo:lpsolve}. If a set $A\in\mathcal{A}$ from the algorithm is not tight, then all the edges in its induced subgraph $G_{x}[A]$ of the support graph are tight.
\end{lem}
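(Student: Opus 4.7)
My plan is to proceed by induction on the construction of $A$ from singletons via merges in Algorithm~\ref{algo:lpsolve}, using the slack $s(A') := |A'| - b(A') - y(E(G[A']))$ of inequality~(\ref{ineq:load}) as the central quantity; thus $A'$ is tight iff $s(A') = 0$. The base case is a singleton $A = \{v\}$, which is vacuous because $G_x[\{v\}]$ has no edges. For the inductive step, $A$ was formed at some step $i$ by merging $A_i^1, A_i^2 \in \mathcal{A}_{i-1}$ via the edge $e_i$, and I would first observe that $G_x[A]$ decomposes as $G_x[A_i^1] \cup G_x[A_i^2] \cup \{e_i\}$: any edge $e_j$ processed later with both endpoints in $A$ satisfies $A_j^1 = A_j^2$ and is therefore assigned $y(e_j) = 0$. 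Hence it suffices to show that $e_i$ is tight and that both $A_i^1$ and $A_i^2$ are themselves not tight, so that the inductive hypothesis applies.

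Tightness of $e_i$ is easy: writing $s_j := s(A_i^j)$, the merge yields $s(A) = s_1 + s_2 - y(e_i)$, and $s(A) > 0$ together with $y(e_i) = \min\{1+u(e_i), s_1+s_2\}$ forces $y(e_i) = 1 + u(e_i)$. The harder step is ruling out the case where one of the two children, say $A_i^1$, is tight. Here my plan is to establish a short helper claim: every $A' \in \mathcal{A}$ satisfies $s(A') \leq 1$. This follows by the same induction, since singletons have $s(\{v\}) = 1 - b(v) \leq 1$, and for a merged set one either has $y(e) = s_1 + s_2$ (giving $s = 0$) or $y(e) = 1 + u(e) \geq 1$ (giving $s \leq 1 + 1 - 1 = 1$).

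With the helper claim, the argument closes quickly. If $s_1 = 0$, then $s_2 \leq 1 \leq 1 + u(e_i)$, so the algorithm picks $y(e_i) = s_2$ and $s(A) = 0$, contradicting our assumption that $A$ is not tight. Symmetrically $s_2 > 0$. The inductive hypothesis then gives that every edge of $G_x[A_i^1]$ and $G_x[A_i^2]$ is tight, and combined with the tightness of $e_i$ this proves the lemma. I expect the main obstacle to be exactly the helper claim: the uniform bound $s(A') \leq 1$ is what forces both children of a non-tight set to be non-tight themselves, and without it the induction would fail to propagate.
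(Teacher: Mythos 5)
Your proof is correct, and the overall skeleton (induction over the laminar family $\mathcal{A}$, the key step being to rule out the case that one child of a non-tight merged set is itself tight) matches the paper's. The organization, however, is genuinely different. The paper runs a minimal-counterexample argument: it fixes a non-tight $A$ containing a non-tight support edge, takes WLOG a non-tight child $A_i^1$, uses minimality to conclude all of $A_i^1$'s support edges are tight, and from that computes the slack explicitly as $\sigma(A_i^1) = 1 - (u_x(E(G[A_i^1])) + b(A_i^1)) \leq 1 + u(e_i) = y(e_i)$; since $\sigma(A) > 0$ this forces $\sigma(A_i^2) > 0$, and the non-tight edge must therefore live in $A_i^2$, yielding a smaller counterexample. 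You instead isolate the helper claim $\sigma(A') \leq 1$ for every $A' \in \mathcal{A}$ and prove it by a \emph{separate, self-contained} induction (singletons have slack $1 - b(v) \leq 1$; a merge either zeroes the slack or subtracts $1 + u(e_i) \geq 1$ from a sum bounded by $2$). This lets you rule out $\sigma(A_i^1) = 0$ and $\sigma(A_i^2) = 0$ symmetrically, without first appealing to the induction hypothesis of the main lemma, and then apply that hypothesis cleanly to both children at once. The paper's route is shorter; yours is cleaner and yields a reusable structural fact ($\sigma(A') \leq 1$) that the paper only extracts in a specific configuration.

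One small gap to patch in the decomposition $E(G_x[A]) = E(G_x[A_i^1]) \cup E(G_x[A_i^2]) \cup \{e_i\}$: you handle edges $e_j$ with $j > i$ and both endpoints in $A$, but you should also dispose of edges $e_j$ with $j < i$ having one endpoint in $A_i^1$ and the other in $A_i^2$. Such an $e_j$ satisfies $y(e_j) = 0$, since otherwise the algorithm would already have merged the parts of $A_i^1$ and $A_i^2$ containing its endpoints at step $j$, contradicting that $A_i^1$ and $A_i^2$ are still distinct members of $\mathcal{A}_{i-1}$. Equivalently, one can simply invoke that the support graph is a forest, so there is at most one support edge between $A_i^1$ and $A_i^2$, and it must be $e_i$. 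Either fix is routine.
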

\begin{proof}
Assume this were false. Take a minimal counterexample $A$. As the claim certainly holds for sets consisting only of one vertex ($G_{x}[A]$ has no edges if $|A|=1$), we know that $|A|\geq2$. We can write $A=A^1_i\cup A^2_i$ with their associated edge $e_i$ (for some $i$). We know that $e_i$ has to be tight by line 5, as $A$ is not tight. Otherwise, the algorithm could have increased $y(e_i)$ further. At least one of the subsets $A^1_i$ and $A^2_i$ of $A$ is not tight, otherwise, $A$ were tight. W.l.o.g we may assume that $A^1_i$ is not tight. Then all of its edges are tight, by minimality of $A$. However, then we know that $x(E(G[A^1_i]))=|A^1_i|-1$. Thus,
\begin{multline*}|A^1_i|-b(A^1_i)-y(E(G[A^1_i]))=|A^1_i|-b(A^1_i)-x(E(G[A^1_i]))-u_x(E(G[A^1_i]))\\
=|A^1_i|-b(A^1_i)-(|A^1_i|-1)-u_x(E(G[A^1_i]))=1-(u_x(E(G[A^1_i])+b(A^1_i))\leq1+u(e_i).\end{multline*} This implies that $e_i$ uses up all the slack of $A^1_i$, when it is made tight. Thus, there must be slack remaining on $A^2_i$ and it cannot be tight. As $A$ contains an edge that is not tight in its support graph, this edge must be contained in $A^2_i$. We can conclude that $A^2_i$ is a smaller counterexample. This contradicts the minimality of $A$.
\end{proof} 

\begin{cor}
\label{cor:tight}
 Let $e_i\in E$, $A^1_i$ and $A^2_i$ such that they fulfill the requirements in line 4 of the algorithm. If $e_i$ is tight then exactly one of the following is true:
 \begin{itemize}
  \item Neither $A^1_i$, nor $A^2_i$ are tight.
  \item $A_i^j$ is tight, $A_i^{3-j}$ is not tight and $u(e_i)=u_x(E(G[A_i^{3-j}]))+b(A_i^{3-j})=0$.
 \end{itemize}

\end{cor}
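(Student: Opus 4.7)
My plan is to analyze, just before $e_i$ is processed, the slacks $S_j := |A_i^j| - b(A_i^j) - y(E(G[A_i^j]))$ of inequality~(\ref{ineq:load}) for $j=1,2$. By line~5 of Algorithm~\ref{algo:lpsolve}, the algorithm sets $y(e_i) = \min\{1 + u(e_i),\, S_1 + S_2\}$. Since $e_i$ being tight means $y(e_i) = 1 + u(e_i) \geq 1$, we must have $S_1 + S_2 \geq 1$, which already excludes the possibility that both $A_i^1$ and $A_i^2$ are tight. This confirms that the two alternatives in the corollary are exhaustive and mutually exclusive.

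If both $S_1 > 0$ and $S_2 > 0$, we land in the first case. Otherwise, exactly one of the sets, say $A_i^j$, is tight while $A_i^{3-j}$ is not. Then $y(e_i) = \min\{1 + u(e_i),\, S_{3-j}\} = 1 + u(e_i)$, so $S_{3-j} \geq 1 + u(e_i)$. The heart of the argument will be to prove the reverse inequality $S_{3-j} \leq 1$. Combined with the lower bound, this pins down $u(e_i) = 0$ and $S_{3-j} = 1$, and plugging into the definition of $S_{3-j}$ then forces $b(A_i^{3-j}) + u_x(E(G[A_i^{3-j}])) = 0$, so that both non-negative terms vanish individually, matching the second case exactly.

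To get $S_{3-j} \leq 1$, I will invoke Lemma~\ref{lem:tight_edges} on $A_i^{3-j}$: since it is not tight, every edge of $G_x[A_i^{3-j}]$ is tight, so $y(e) = 1 + u(e)$ on these edges. The set $A_i^{3-j}$ was built up by the algorithm through successive merges along active edges, hence $G_x[A_i^{3-j}]$ is a spanning tree of $A_i^{3-j}$ with exactly $|A_i^{3-j}| - 1$ edges, while any edge of $G[A_i^{3-j}]$ outside the support contributes $0$ to $y$. Summing gives $y(E(G[A_i^{3-j}])) = (|A_i^{3-j}| - 1) + u_x(E(G[A_i^{3-j}]))$, whence $S_{3-j} = 1 - b(A_i^{3-j}) - u_x(E(G[A_i^{3-j}])) \leq 1$, as required. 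The one delicate point is the structural observation that each set of the partition $\mathcal{A}_{i-1}$ is the vertex set of a connected component of the current support graph; this follows by a straightforward induction on the merge steps performed in line~7, but it is what makes the clean identity for $y(E(G[A_i^{3-j}]))$ available.
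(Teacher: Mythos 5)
Your proof is correct and follows essentially the same route as the paper's: apply Lemma~\ref{lem:tight_edges} to the non-tight set, use the fact that its induced support graph is a spanning tree to compute its slack as $1 - b(A_i^{3-j}) - u_x(E(G[A_i^{3-j}])) \leq 1$, and squeeze against the lower bound $1 + u(e_i)$ coming from line~5. You are a bit more explicit than the paper about two small points that the paper leaves implicit (that both sets being tight would force $y(e_i) = 0$, and that $G_x[A_i^{3-j}]$ has exactly $|A_i^{3-j}|-1$ active edges), but the argument is the same.
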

\begin{proof}
If one of the sets is tight. W.l.o.g. we can assume that this set is $A^1_i$. If the other set $A^2_i$ is tight, we know that $y(e_i)=0$, which contradicts our assumption. Otherwise, by Lemma \ref{lem:tight_edges}, all edges in $E(G_{x}[A^2_i])$ are tight. Then $y(e_i)=|A^2_i|-b(A^2_i)-y(E(G[A^2_i]))=1-(u_x(E(G[A^2_i]))+b(A^2_i))\leq1+u(e_i)=y(e_i)$. Hence $u_x(E(G[A^2_i]))+b(A^2_i)= u(e_i)=0$.
\end{proof}

\begin{thm}
\label{thm:lp_algo}
Algorithm \ref{algo:lpsolve} works correctly and has running time $\mathcal{O}(m\log n)$.
\end{thm}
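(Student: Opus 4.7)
The proof has three parts: feasibility of $y$, optimality of its objective value, and the $\mathcal{O}(m\log n)$ time bound. The plan is to handle them in that order.

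For feasibility, the bound $0\le y(e)\le 1+u(e)$ of (\ref{ineq:bounds}) is enforced directly in line 5. Since the algorithm activates $e_i$ only when its endpoints lie in two distinct sets of the current partition, the support graph $G_x$ never contains a cycle; together with $x(e)\le 1$ this yields (\ref{ineq:forest}). The delicate point is the load inequality (\ref{ineq:load}). I would prove by induction over the iterations the invariant that $y(E(G[A]))\le |A|-b(A)$ holds for every $A\subseteq V$, not just for $A\in\mathcal{A}$. The inductive step reduces to showing that, whenever $y(e_i)$ is raised in line 5, the increment is dominated by the slack $|A|-b(A)-y(E(G[A]))$ of every subset $A\subseteq A_i^1\cup A_i^2$ containing both endpoints of $e_i$. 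Using Lemma \ref{lem:tight_edges} to force enough $y$-mass on the internal edges of $A_i^1$ and $A_i^2$, this in turn follows from the monotonicity property that the slack of a laminar set dominates the slack of any of its subsets.

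For optimality I would rewrite the objective as $\gamma|V|+\sum_{e\in E}\frac{c(e)-\gamma}{1+u(e)}y(e)$; since $c(e)\le\gamma$, minimizing this is equivalent to maximizing $\sum_{e\in E}\frac{\gamma-c(e)}{1+u(e)}y(e)$, whose weights are precisely the sort key of the algorithm in reverse order. A polymatroid-style exchange argument then shows the greedy is optimal: given any feasible $y'$ and the first edge $e_i$ (in sort order) where $y'$ and $y$ disagree, one transfers $y$-mass from a later, less-preferred edge to $e_i$, which by the structural information encoded in Lemma \ref{lem:tight_edges} and Corollary \ref{cor:tight} can always be performed without leaving the feasible region; since $e_i$ has larger weight than any later edge, the objective cannot decrease, and iterating shows $y$ matches any optimal $y'$ in value.

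The running time comes from sorting ($\mathcal{O}(m\log n)$, using $m\le\binom{n}{2}$) plus the main loop, which I would implement with a union-find data structure whose representatives additionally store $|A|$, $b(A)$, and $y(E(G[A]))$ for the associated set; these aggregates merge in $\mathcal{O}(1)$ per union, and each of the $m$ iterations performs $\mathcal{O}(1)$ find operations and arithmetic, giving $\mathcal{O}(m\alpha(n))$ for the loop and $\mathcal{O}(m\log n)$ overall. I expect the hardest step to be the feasibility of (\ref{ineq:load}) for general (non-laminar) $A\subseteq V$ together with the polymatroid exchange for optimality, since both must reconcile the global LP constraints with the narrow per-step moves of the greedy.
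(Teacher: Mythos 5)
Your feasibility plan and your running-time analysis closely mirror the paper's proof and are essentially sound: the paper proves the invariant via two claims (first comparing the slack of a newly created laminar set to the slacks of its two constituent parts, then extending to arbitrary subsets by induction), then handles general $A\subseteq V$ by splitting across connected components of $G_x$. Your inductive invariant over iterations is the same argument in different clothing. (One caution: you say the slack of a laminar set \emph{dominates} the slack of any of its subsets; the paper's Claim 2 goes the other way, $\sigma(B)\ge\sigma(A)$ for $B\subseteq A\in\mathcal{A}$, which is what you actually need to bound the increment by the slack of every $A$ containing both endpoints.)

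The real gap is in optimality. You invoke Lemma~\ref{lem:tight_edges} and Corollary~\ref{cor:tight} as the ``structural information'' that licenses the exchange. But those statements are about the solution $x,y$ \emph{produced by the algorithm}: they describe the laminar family $\mathcal{A}$ and the sets $A_i^j$ it builds. They say nothing about which sets are tight for an arbitrary optimal $y^*$, and it is precisely the tight sets of $y^*$ that obstruct increasing $y^*(e_k)$. To carry out the exchange you must exhibit, for the connected component $G_k$ of $e_k$ in the algorithm's partial support, an edge $f\in\Gamma$ (active in $G_{x^*}$, higher index, incident to $V(G_k)$) that lies inside \emph{every} tight set of $y^*$ containing $V(G_k)$; otherwise shifting mass from any single $f$ to $e_k$ breaks the constraint for some tight set that contains $e_k$ but not $f$. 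This is the paper's Claim~3, and proving it is the technical heart of the optimality argument: it needs the observation that unions of tight sets are tight and carry no active edges between them, followed by an inclusion--exclusion induction over families of tight sets. Your ``polymatroid-style exchange'' skips this entirely, so the proposal does not establish that the exchange stays feasible. You also do not address the final case split: once $f$ is found, if $u(f)\le u(e_k)$ one exchanges $y$-mass, but if $u(f)>u(e_k)$ one must instead exchange $x$-mass and use the assumption $u(e)<u(f)\Rightarrow c(e)\le c(f)$ to conclude the objective does not increase; a single generic exchange does not cover both cases.
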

\begin{proof}
The running time is dominated by sorting.\\
For the correctness, we first check that the algorithm outputs a solution to our LP. The minimum in line 5 guarantees that inequality (\ref{ineq:bounds}) is fulfilled. We have already seen that the support graph of our solution is a forest, which means that inequality (\ref{ineq:forest}) is also satisfied.
It remains to check that inequality (\ref{ineq:load}) holds. Each $A\in\mathcal{A}$ fulfills the inequality, when it is introduced by line 5. Since $\mathcal{A}$ is a laminar family, we never change the value of $y(E(G[A]))$ after $A$ has been introduced, so we already know that all $A\in\mathcal{A}$ satisfy inequality (\ref{ineq:load}) when the algorithm is finished.

We define the slack of a set $A\subseteq V$ as the slack of inequality \ref{ineq:load} for that set and denote it by
\[\sigma(A):=|A|-b(A)-y(E(G[A])).\]
Then we can prove that when the algorithm introduces a new set $A$, it has no more slack than each of the joined subsets.
\begin{claim}[1]
\label{claim:subsets_algo}
Let $A_i^1,A_i^2,A\in \mathcal{A}$ such that $A=A_i^1\cup A_i^2$. We claim that $\sigma(A)\leq\sigma(A_i^j)$ for $j=1,2$.
\end{claim}
First, note that 
\begin{align*}
\sigma(A)&=|A|-b(A)-y(E(G[A]))\\
&=|A_i^1|+|A_i^2|-(b(A_i^1)+b(A_i^2))-(y(E(G[A_i^1]))+y(E(G[A_i^2])) + y(e_i))\\
&=\sigma(A_i^1)+\sigma(A_i^2)-y(e_i). \tag{\textborn}\label{eq:A_slack}
\end{align*}
So it is sufficient to show that $y(e_i)\geq\max\{\sigma(A_i^1),\sigma(A_i^2)\}$.
By Line 5 of the algorithm, we have 
\[y(e_i)=\min\{1+u(e_i),\sigma(A_i^1)+\sigma(A_i^2)\}.\]
If we are in the second case, then $y(e_i)=\sigma(A_i^1)+\sigma(A_i^2)$ and we are done.
Otherwise, $y(e_i)=1+u(e_i)$, so $e_i$ is tight. By Corollary \ref{cor:tight}, at least one of the sets $A_i^1$, $A_i^2$ is not tight. W.l.o.g. let this set be $A_i^1$. Then all edges in $E(G_{x}[A^1_i])$ are tight and we have $\sigma(A_i^1)=|A_i^1|-b(A_i^1)-y(E(G[A_i^1]))=1-(b(A_i^1)+u_x(E(G[A_i^1])))\leq1+u(e_i)=y(e_i)$.
For the other set, we have two cases. Either $A_i^2$ is not tight and we can apply the same proof again, or it is tight, then $\sigma(A^2_i)=0\leq y(e_i)$.
This proves the claim.\\

As a next step, we extend Claim (1) to all subsets of $A$.
\begin{claim}[2]
Let $A\in\mathcal{A}$. We claim that each subset $B\subseteq A$ has slack $\sigma(B)\geq\sigma(A)$.
\end{claim}
We prove this by induction on the number of iterations. Let $i'$ be the first index such that the algorithm sets $y(e_{i'})>0$. The claim holds for all iterations before that, because all sets are singletons until this point.
The algorithm creates $A:=A_{i'}^1\cup A_{i'}^2$ with $|A|=2$. Since all subsets of $A$ are in $\mathcal{A}$, we are in the situation of Claim (1) and there is nothing left to prove.

Now let $i>i'$ and assume, that our claim holds in all previous iterations. The algorithm creates $A:=A_i^1\cup A_i^2$. Let $B\subseteq A$. For $|B|=1$, we have $B\in\mathcal{A}$, because all singletons are in $\mathcal{A}$. So we can apply the Claim (1). If $|B|>1$, we split $B$ into two sets $B_1:=B\cap A_i^1$ and $B_2:=B\cap A_i^2$.
We observe that $E(G[B])\setminus (E(G[B_1])\cup E(G[B_2]))\subseteq\{e_i\}$ and thus 
\begin{align*}
\hspace{2.5cm}\sigma(B)&=&&|B|-b(B)-y(E(G[B]))\\
&=&&|B_1|+|B_2|-b(B_1)-b(B_2)-y(E(G[B_1]))\\
&&-&y(E(G[B_2]))-y(E(G[B])\setminus (E(G[B_1])\cup E(G[B_2])))\hspace{2cm}\\
&\geq&&\sigma(B_1)+\sigma(B_2)-y(e_i).
\end{align*}
We use our induction hypothesis on $B_1$ and $B_2$, if they are nonempty, to see that $\sigma(B_1)\geq \sigma(A_i^1)$ and $\sigma(B_2)\geq \sigma(A_i^2)$ respectively. So we can use equation (\ref{eq:A_slack}) to compute
\[\sigma(B)\geq\sum\limits_{j=1,2}\sigma(B_i) - y(e_i)\geq \sum\limits_{j=1,2}\sigma(A_i^j)-y(e_i)=\sigma(A).\]
This proves Claim (2). 

Finally, we observe that for $B_1\subseteq V$ and $B_2\subseteq V$ from different connected components of $G_x$, we have $\sigma(B_1\cup B_2)=\sigma(B_1)+\sigma(B_2)$. This means that we can split any subset of the vertices $B\subseteq V$ into subsets of the toplevel sets of $\mathcal{A}$, which are exactly the connected components of $G_x$. Then we can use Claim (2) on each of the subsets to see that they have nonnegative slack. The observation implies that also $B$ has nonnegative slack. So inequality \ref{ineq:load} is always satisfied. 
$ $\\

Next we want to prove optimality. Assume that $y$ were not optimum. Let $y^*$ be an optimum solution which maximizes the index of the first edge in the order of the algorithm in which $y$ and $y^*$ differ and among those minimizes the difference in this edge. Let this index be denoted by $k$. As the algorithm always sets the values to the maximum that is possible without violating an inequality, we know that $y^*(e_k)<y(e_k)$. 

By the ordering of the algorithm, we know that \[\frac{c(e_k)-\gamma}{1+u(e_k)}\leq \frac{c(e_i)-\gamma}{1+u(e_i)}\] for all $i>k$.
Our goal will be, to find an edge $e_i$ with $i>k$ such that we can increase $y^*(e_k)$ and avoid violating constraints or increasing the objective by decreasing $y^*(e_i)$ in $x^*,y^*$.

Let $G_k$ be the connected component of $e_k$ in the subgraph of $G$ containing only $e_k$ and the active edges with index less than $k$. Define \[\Gamma:=\{e_i\in E(G_{x^*})\mid i>k \text{ and }e_i \text{ incident to } v\in V(G_k)\}.\] Note that $\Gamma\neq\emptyset$, because otherwise, we could increase $y^*(e_k)$ to $y(e_k)$ without violating any constraints. Since $c(e)\leq\gamma$, this would not increase the objective value. 

We will prove that all tight sets containing the vertices of $G_k$ must have a common edge in $\Gamma$.
\begin{claim}[3]
Let $\mathcal{T}:=\{B\subseteq V\mid V(G_k)\subseteq B \text{ and } B \text{ tight}\}$ be the family of tight sets of $x^*,y^*$ containing the vertices of $G_k$. We claim that 
\[\Gamma\cap \bigcap\limits_{B\in\mathcal{T}}E(G_{x^*}[B])\neq\emptyset.\]
\end{claim}

If there is an edge in $\Gamma$ between vertices of $G_k$, then this certainly holds.
Otherwise, we know that $V(G_k)$ is not tight, because the algorithm was able to set $y(e_k)>y^*(e_k)$.\\
\textbf{Observation:} Unions of tight sets are tight and there are no active edges between tight sets.

Let $A,B$ be tight sets. Let $\Delta:=E(G[A\cup B])\setminus (E(G[A])\cup E(G[B]))$ be the edges between $A$ and $B$. We have
\begin{align*}
y(E(G[A\cup B]))&\leq |A\cup B|-b(A\cup B)=|A|-b(A)+|B|-b(B)-(|A\cap B|-b(A\cap B))\\
&=y(E(G[A]))+y(E(G[B]))-(|A\cap B|-b(A\cap B))\\
&=y(E(G[A\cup B]))-y(\Delta)+y(E(G[A\cap B]))-(|A\cap B|-b(A\cap B))\\
&\leq y(E(G[A\cup B]))-y(\Delta),
\end{align*}
where we used LP inequality $(\ref{ineq:load})$ in the first and in the last step.
As $y\geq0$, we must have equality everywhere, and $y(E(G[A\cup B])\setminus (E(G[A])\cup E(G[B])))=0$. This proves our observation.

Let $\mathcal{S}:=\{S_1,\ldots,S_p\}\subseteq \mathcal{T}$ be a set of $p\geq 2$ tight sets containing the vertices of $G_k$ and set $Z:=\bigcup_{S_i\in\mathcal{S}}S_i$. By our observation, $Z$ is tight as well.
We will introduce some notation to write down the proof of the claim. For $a\in\mathbb{N}$, we write $[a]:=\{1,\ldots,a\}$. Then for any index-set $\emptyset\subsetneq I\subseteq[p]$, denote by $V_I:=\bigcap_{i\in I}S_i\setminus V(G_k)$ the vertices of the intersection of the $S_i$ belonging to the indices in $I$ outside of $G_k$, by $E_I:=E(G_{x^*}[V_I])$ the active edges in $V_I$ and by $\Delta_I:=\Gamma\cap E(G_{x^*}[\bigcap_{i\in I}S_i])$ the active edges between $V_I$ and $G_k$. These sets are illustrated in Figure \ref{fig:si_sets}. Furthermore, we denote for $A\subset V(G)$ by $\sigma^*(A):=|A|-b(A)-y^*(E(G[A]))$ the slack of the inequality for $A$ in the optimum solution.

\begin{figure}\centering
\includegraphics{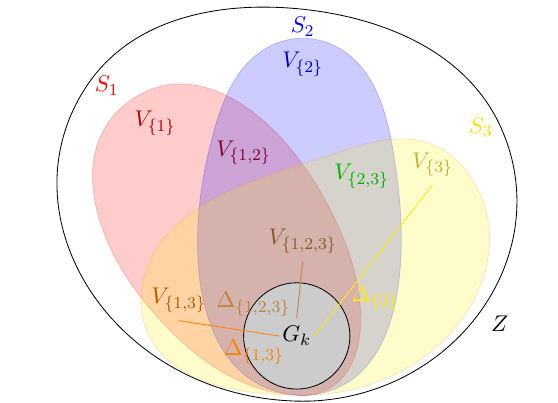}
\caption{An illustration of the sets $Z$, $G_k$, $S_i$, $V_I$ and $\Delta_I$ for $p=3$ and $i\in[3], I\subseteq[3]$. The set $V_{\{1\}}$ for example contains the vertices in the red, the purple, the orange and the brown area, while the set $V_{\{1,2\}}$ contains only the vertices in the purple and the brown area.}
\label{fig:si_sets}
\end{figure}

We will prove a stronger statement than our claim by induction on $p$: For each $p\geq2$ and $I\subseteq[p]$ with $|I|\geq2$, it holds that
\[y^*(\Delta_{I})-\sigma^*(V_{I})=\sigma^*(V(G_k)).\]
Since $V(G_k)$ is not tight, the righthandside must be greater than zero. By our LP-inequalities, $\sigma^*(V_I)\geq0$. Hence $y^*(\Delta_I)>0$ and $\Delta_I\neq\emptyset$.

We start with $p=2$.
Using the tightness of $S_1\cup S_2$, we get 
\[y^*(E(G[Z]))=|Z|-b(Z)=|S_1|-b(S_1)+|S_2|-b(S_2)-(|V_{\{1,2\}}|-b(V_{\{1,2\}}))-(|V(G_k)|-b(V(G_k))).\]
Our observation that edges between tight sets can not be active implies that \[y^*(E(G_k))+y^*(E_{\{1\}})+y^*(E_{\{2\}})-y^*(E_{\{1,2\}})+y^*(\Delta_{\{1\}})+y^*(\Delta_{\{2\}})-y^*(\Delta_{\{1,2\}})=y^*(E(G[Z])).\]
Using the tightness of $S_1$ and $S_2$, we compute
\begin{align*}
y^*(E(G[Z]))&=&&y^*(S_1)+y^*(S_2)-(|V_{\{1,2\}}|-b(V_{\{1,2\}}))-(|V(G_k)|-b(V(G_k)))\\
&=&&2\cdot y^*(E(G_k))+y^*(E_{\{1\}})+y^*(E_{\{2\}})+y^*(\Delta_{\{1\}})+y^*(\Delta_{\{2\}})\\
&&-&(|V_{\{1,2\}}|-b(V_{\{1,2\}}))-(|V(G_k)|-b(V(G_k)))\\
&=&&y^*(E(G[Z])) + y^*(E(G_k)) + y^*(E_{\{1,2\}})+y^*(\Delta_{\{1,2\}})\\
&&-&(|V_{\{1,2\}}|-b(V_{\{1,2\}}))-(|V(G_k)|-b(V(G_k)))\\
&=&&y^*(E(G[Z])) + y^*(\Delta_{\{1,2\}}) - \sigma^*(V_{\{1,2\}}) - \sigma^*(V(G_k)).
\end{align*}
We conclude \[y^*(\Delta_{\{1,2\}}) - \sigma^*(V_{\{1,2\}}) = \sigma^*(V(G_k)).\]

For the induction step, let $p>2$ and assume we have for any index-set $I\subseteq[p]$ with $2\leq|I|<p$ that
\[y^*(\Delta_I)-\sigma^*(V_I)=\sigma^*(V(G_k)).\]
Similarly as in the case $p=2$, we want to write $|Z|-b(Z)$ as a sum of the $|S_i|-b(S_i)$ minus the values that we counted multiple times. We added $|V(G_k)|-b(G_k)$ with each of the $S_i$, but only needed it once. So we have to subtract it $(p-1)$ times. Then for the remainder, we use the following observation:\\
\textbf{Observation:} Let $C_1,\ldots,C_p$ be some sets and $f:\bigcup_{i=1,\ldots,p}C_i\rightarrow\reals$ a function on the elements of the sets. Then 
\[f\left(\bigcup\limits_{i=1,\ldots,p}C_i\right)=\sum\limits_{j=1}^p(-1)^{j-1}\sum_{I\in \binom{[p]}{j}}f\left(\bigcap\limits_{i\in I}C_i\right).\]
We compute
\begin{align*}
y^*(E(G[Z]))&=&&|Z|-b(Z)\\
&=&&\sum\limits_{j=1}^p\left[|S_j|-b(S_j)\right]-(p-1)\cdot(|V(G_k)|-b(G_k))-\sum\limits_{j=2}^p (-1)^{j}\sum\limits_{I\in\binom{[p]}{j}}\left[|V_I|-b(V_I)\right].\\
\end{align*}
Then we use tightness of the $S_j$ for $j=1,\ldots,p$ and the fact that we can split $y^*(E(G[S_j]))=y^*(E_j)+y^*(\Delta_j)+y^*(E(G_k))$ to rewrite this as
\begin{align*}
y^*(E(G[Z]))&=&&\!\!\!\sum\limits_{j=1}^p y^*(E(G[S_j]))-(p-1)\cdot(|V(G_k)|-b(G_k))-\sum\limits_{j=2}^p (-1)^{j}\sum\limits_{I\in\binom{[p]}{j}}\left[|V_I|-b(V_I)\right]\\
&=&&\!\!\!\sum\limits_{j=1}^p \left[y^*(E_j)+y^*(\Delta_j)+y^*(E(G_k))\right]-(p-1)\cdot(|V(G_k)|-b(G_k))\\
 &&\!\!\!-&\!\!\!\sum\limits_{j=2}^p (-1)^{j}\sum\limits_{I\in\binom{[p]}{j}}\left[|V_I|-b(V_I)\right].\\
\end{align*}
We use the reverse argument of the first step to assemble $y^*(Z)$ and add the remaining values:
\begin{align*}
y^*(E(G[Z]))&=&&y^*(Z)+\sum\limits_{j=2}^p (-1)^{j}\sum\limits_{I\in\binom{[p]}{j}}\left[y^*(E_I)+y^*(\Delta_I)\right]+(p-1)y^*(E(G_k))\\
&&-&(p-1)(|V(G_k)|-b(G_k))-\sum\limits_{j=2}^p (-1)^{j}\sum\limits_{I\in\binom{[p]}{j}}\left[|V_I|-b(V_I)\right].\\
\end{align*}
And finally we simplify this to
\begin{align*}
y^*(E(G[Z]))&=&&\!y^*(Z)-(p-1)\sigma^*(V(G_k))
+\sum\limits_{j=2}^p (-1)^{j}\!\!\sum\limits_{I\in\binom{[p]}{j}}\left[y^*(\Delta_I)+y^*(E_I)-(|V_I|-b(V_I))\right]\\
&=&&\!y^*(Z)-(p-1)\sigma^*(V(G_k))+(-1)^p(y^*(\Delta_{[p]})-\sigma^*(V_{[p]}))\\
&&\!\!\!+&\!\sum\limits_{j=2}^{p-1} (-1)^{j}\sum\limits_{I\in\binom{[p]}{j}}\left[y^*(\Delta_I)-\sigma^*(V_I)\right].
\end{align*}
Now we can use our induction hypothesis to retrieve
\[\sum\limits_{j=2}^{p-1} (-1)^{j}\sum\limits_{I\in\binom{[p]}{j}}\left[y^*(\Delta_I)-\sigma^*(V_I)\right]
=\sum\limits_{j=2}^{p-1} (-1)^{j}\sum\limits_{I\in\binom{[p]}{j}}\sigma^*(V(G_k))
=(p-1+(-1)^{p+1})\sigma^*(V(G_k)),\]
which can be put back into our previous equation:
\begin{align*}
\hspace{1cm}y^*(E(G[Z]))&=&&y^*(E(G[Z]))-(p-1)\sigma^*(V(G_k))+(-1)^p(y^*(\Delta_{[p]})-\sigma^*(V_{[p]}))\hspace{1cm}\\
&&+&(p-1+(-1)^{p+1})\sigma^*(V(G_k))\\
&=&&y^*(E(G[Z]))+(-1)^p(y^*(\Delta_{[p]})-\sigma^*(V_{[p]})-\sigma^*(V(G_k)))
\end{align*}
As before, we conclude
\[y^*(\Delta_{[p]})-\sigma^*(V_{[p]})=\sigma^*(V(G_k)).\]

Finally, we can pick an edge $f\in\Gamma$ that is contained in all tight sets that contain the vertices of $G_k$. If $u(f)\leq u(e_k)$, we know that $\frac{1}{1+u(f)}\geq\frac{1}{1+u(e_k)}$. So we can decrease $y^*(f)$ and increase $y^*(e_k)$ by the same amount without violating any constraints. By the ordering of our algorithm, this can not increase the objective value. This would contradict our choice of $y^*$. Hence $u(f)>u(e_k)$. But then $c(f)\geq c(e_k)$ and we could decrease $x^*(f)$ and increase $x^*(e_k)$ without increasing the objective value. Furthermore, we also do not create a violation this way, because $\epsilon\cdot(1+u(f))>\epsilon\cdot(1+u(e_k))$ for $\epsilon>0$. This contradicts our choice of $y^*$ and concludes the proof.
\end{proof}
The support graph of the LP solution computed by Algorithm \ref{algo:lpsolve} is always a forest. Thus, Theorem \ref{thm:lp_algo} implies the following:
\begin{cor}
 There is always a solution $x,y$ to both LPs, such that the support graph $G_x$ is a forest.
\end{cor}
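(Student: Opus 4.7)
The plan is to read the corollary off directly from the run of Algorithm \ref{algo:lpsolve}. Let $y$ be the output of the algorithm and set $x(e):=y(e)/(1+u(e))$. By the equivalence of the two LP formulations established in Section \ref{sec:lp}, the pair $(x,y)$ is simultaneously a solution of LP (2)--(4) and LP (5)--(8); Theorem \ref{thm:lp_algo} says $y$ is optimal for (5)--(8) and hence $x$ is optimal for (2)--(4) as well.

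It remains to verify that $G_x$ is a forest, and I would do this by induction on the iteration index $i$. The invariant I would maintain is that after iteration $i$ the active edges form a forest whose connected-component vertex sets are exactly the classes of the partition $\mathcal{A}_i$. The base case $i=0$ is immediate, since no edge is active and $\mathcal{A}_0$ consists of singletons. For the inductive step, the algorithm only sets $y(e_i)>0$ in line 5 when $e_i$ has endpoints in two distinct classes $A_i^1,A_i^2\in\mathcal{A}_{i-1}$; adding such an edge joins two previously disjoint tree components into a single tree on $A_i^1\cup A_i^2$, so no cycle is created, and this matches the update of the partition in line 7. In the other branch the support graph and the partition are both unchanged and the invariant is trivially preserved.

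Since the final support graph decomposes into tree components, $G_x$ is a forest, which establishes the corollary. There is no real obstacle in the argument: the statement is essentially a bookkeeping consequence of the laminar construction of $\mathcal{A}$ already exploited in the proof of Theorem \ref{thm:lp_algo}, so the only thing to be careful about is making the correspondence between $\mathcal{A}_i$ and the connected components of the partial support graph explicit in the induction.
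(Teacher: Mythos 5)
Your proof is correct and takes essentially the same route as the paper: the paper also observes that the algorithm's support graph is a forest (because an edge is only made active when it joins two distinct classes of the laminar partition) and then invokes Theorem~\ref{thm:lp_algo} for optimality. You simply spell out the bookkeeping induction that the paper leaves implicit when it states that $\mathcal{A}$ being laminar guarantees the support graph is a forest.
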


\section{The Rounding Strategy}
\label{sec:rounding}
Now we want to round the LP solution, computed by Algorithm \ref{algo:lpsolve}, to get an integral solution. We do so by rounding up edges $e$ with $x(e)\geq\alpha$, for some $0\leq\alpha\leq1$ to be determined later. All other edges are rounded down. The forest arising from this rounding step may contain components $T$ with $b(V(T))+u(E(T))>1$. These large components will be split into at most $2\cdot(b(V(T))+u(E(T)))$ legal components. We achieve this by using a splitting technique that is often used for these cases, for example in \cite{massberg05} and also in \cite{troebst22}.

We will start by explaining how the splitting technique works in Section \ref{sec:tree_splitting}. However, for the analysis, we only require the result that it is possible to split the trees into $2\cdot(b(V(T))+u(E(T)))$ legal components. 
In Section \ref{subsec:lp_structure}, we will study the LP solution, that we get from Algorithm \ref{algo:lpsolve}. We will exploit the structure of this solution in our analysis. Then we will bound the number of components that we get after rounding and splitting in Section \ref{sec:rounding}. We do this by providing an upper bound on the value of each edge after rounding and splitting. Finally, we determine two different $\alpha$ and the implied bounds on the value of our solution compared to the LP solution. The first bound in Section \ref{sec:first_approx} will depend on the edge loads that are occuring in the instance and is better, when only light edges (with load $<\frac{1}{2}$) occur. The second one in Section \ref{sec:general_approx} will give a factor of $3$ independent of the edge loads.

\subsection{Splitting large trees}
\label{sec:tree_splitting}
Given a rounded component $T$ with total load $b(V(T))+u(E(T))>1$, we want to split this tree into a forest consisting only of trees with total load less or equal to $1$.
\begin{lem}[Maßberg and Vygen 2005\cite{massberg05}]
 There is a linear time algorithm that splits a tree with total load $b(V(T))+u(E(T))>1$ into at most $2\cdot(b(V(T))+u(E(T)))$ legal trees.
\end{lem}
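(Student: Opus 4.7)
The plan is to describe a bottom-up DFS that performs the splitting, together with a charging argument bounding the number of resulting trees. Let $L := b(V(T)) + u(E(T))$, so the goal is to produce at most $2L$ legal subtrees. I would root $T$ at an arbitrary vertex and process vertices in post-order, maintaining at each vertex $v$ a value $a(v)$ equal to the load of the still-unfinalized piece currently attached to $v$ (initialized to $b(v)$). For each child $c$ of $v$ in turn, having residual fragment of load $a(c) \leq 1$, the algorithm attempts to absorb it via edge $e_{vc}$: if $a(v) + a(c) + u(e_{vc}) \leq 1$, it absorbs by updating $a(v)$; otherwise a cut is forced and a finalized output tree is emitted. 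The piece still attached to the root at the end is emitted as the final output tree. Linear running time is immediate since each vertex and edge is touched only a constant number of times.

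The cut rule is chosen so that every emitted tree has load at least $1/2$, with at most one exception (the root piece). The easy case is $a(c) \geq 1/2$: cutting $e_{vc}$ emits $c$'s fragment, whose load already lies in $[1/2, 1]$. The awkward case is $a(c) < 1/2$, where a direct cut of $e_{vc}$ alone would produce too light a piece; here the algorithm instead emits the current accumulator piece at $v$, which satisfies $a(v) + u(e_{vc}) > 1/2$ because the forced-cut condition gives $a(v) + a(c) + u(e_{vc}) > 1$ while $a(c) < 1/2$, and then restarts $v$'s accumulator to continue absorbing remaining children. Edges whose own load already exceeds $1/2$ need a small amount of additional bookkeeping, which I would handle by a preprocessing cut pass that removes them first and treats them as singleton cuts in the charging.

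Given this invariant, the count bound follows quickly. Writing $U \geq 0$ for the total load of the cut edges (which vanish from every output tree), the sum of the loads of the output trees equals $L - U$. With every output tree except possibly the root piece having load at least $1/2$, we get $(k - 1)/2 \leq L - U \leq L$ and hence $k \leq 2L + 1$. To tighten this to $k \leq 2L$, I would split into two subcases: if the root piece has load less than $1/2$ then the last forced cut discarded an edge of load at least $1/2$, giving $U \geq 1/2$ and $(k-1)/2 \leq L - 1/2$, so $k \leq 2L$; otherwise the root piece itself has load at least $1/2$, and then $k/2 \leq L$, again giving $k \leq 2L$. The main obstacle is precisely the awkward case above: rearranging the cut to preserve tree structure while keeping every emitted piece of load in $[1/2, 1]$ requires careful bookkeeping, especially when an edge's load alone is close to or exceeds $1/2$. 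This is where Maßberg and Vygen's original argument concentrates its technical effort, and I would follow their scheme in working out the details.
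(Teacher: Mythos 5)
Your overall shape matches the paper's: a bottom-up traversal, greedy packing at each vertex, and a charging argument that lets at most one output piece have load below $\tfrac12$ while ensuring that one light piece is paired against a heavy bin. But there is a real gap in the step you yourself flag as ``awkward,'' and it is exactly the place where the paper's proof has an idea your sketch does not reproduce.

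The problem is the case $a(c) < \tfrac12$. You propose to emit the current accumulator at $v$ and then ``restart $v$'s accumulator to continue absorbing remaining children.'' This is not well-defined as a tree partition: if the emitted piece contains $v$, then $v$ cannot also anchor the restarted accumulator (it would appear in two output trees), and if the emitted piece does not contain $v$, then it is generally disconnected (it is the union of several children's subtrees, all of which were joined only through $v$). Also, your load bookkeeping does not track a connected piece: you argue the emitted piece has load $a(v)+u(e_{vc}) > \tfrac12$, but $e_{vc}$ is the edge being cut, so it should not contribute to any emitted piece whose vertex set is $v$'s accumulator; conversely, if the piece is $c$'s fragment, its load is $a(c)$, which you assumed is $< \tfrac12$. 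So the $\geq \tfrac12$ invariant does not actually hold for the emitted piece under either reading.

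The paper dissolves this by not processing children one at a time at all: at each vertex $v$ it sets up a bin-packing instance on the item set $\{v\}\cup C$ (children $c$ carry weight $l(t(c))+u(e_c)$, $v$ carries $l(t(v))$), packs into bins of size $1$ (first fit suffices), and turns each multi-item bin into one output tree by \emph{shortcutting} the paths in $T$ connecting the corresponding fragments; metricity of $u$ (and $c$) guarantees this does not increase load or cost. Shortcutting is what lets a bin that does not contain $v$ become a connected legal tree without reusing $v$, and it is precisely the mechanism missing from your sketch. The smallest bin is passed up as $v$'s fragment, which yields the ``at most one piece below $\tfrac12$, and it failed to fit somewhere'' invariant cleanly. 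Your final counting step ($k \le 2L$) is fine once that invariant is in hand, but as written your construction does not establish the invariant, and you acknowledge you are deferring to the original argument for the hard part. That deferred part is the crux; the rest is routine.
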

\begin{proof}
We choose an arbitrary vertex $r\in V(T)$ as a root and direct the tree away from $r$. This way, we get an arborescence. During the algorithm, we will maintain a set of trees $\mathcal{T}$, as well as an upper bound on the total load of each tree $l:\mathcal{T}\rightarrow[0,1]$ and an assignment of vertices to trees $t:V(G)\rightarrow\mathcal{T}$. We initialize $\mathcal{T}:=\{\{v\}|v\in V(T)\}$,   $l(\{v\}):=b(v)$ and $t(v):=\{v\}$. Furthermore, we denote for all $v\in V\setminus\{r\}$ by $e_v$ the unique incoming edge of the vertex $v$.

We traverse the arborescence in reverse topological order. At the leaves, we do nothing. At each other vertex $v$, we construct an instance of the bin packing problem. The maximum size of a bin is $1$. Let $C$ be the children of $v$. Then we want to pack $\{v\}\cup C$, where we assign to $v$ the weight $w(v):=l(t(v))$ and to the children $c\in C$ the weight $w(c):=l(t(c))+u(e_c)$.

Then we optimize this bin packing instance. For our purposes, the first fit approach will be enough. For our splitting, we only require, that each bin except for one is packed with at least weight $\frac{1}{2}$ and that the last bin could not have been added to any of the other bins.

Let $B_1,\ldots,B_k$ be the resulting bins ordered in decreasing order of  their weight. For $i=1,\ldots,k$, we do nothing, if $|B_i|=1$. Otherwise, we join the trees $t(p)$ for $p\in B_i$ to a new tree $U$ by shortcutting the paths in $T$ that are connecting them. This can only decrease the load (and cost), because $u$ was metric (and $c$ was metric). Then we remove $t(p)$ from $\mathcal{T}$ for each $p\in B_i$ and we add $U$ to $\mathcal{T}$. We set $l(U):=w(B_i)$. This is an upper bound on the total load of $U$ because we included the weight of the connecting edges in the element weights of the bin packing instance. For the last bin with $i=k$, we also set $t(v):=U$.

Note that due to the reverse topological order, we do not access $t(c)$ or $l(t(c))$ for $c\in C$ again after this point. Also $\mathcal{T}$ will always contain a partition of $T$ and each edge load will have been counted exactly once.

In the end, each tree in $\mathcal{T}$ will correspond to a bin that we produced during our algorithm. Since we always assign the smallest bin to the current vertex, we make sure that it gets passed on to the next iteration. This way, we can guarantee that at most one of the trees has total load less than $\frac{1}{2}$. Furthermore, there was at least one bin, where this tree did not fit. Consequentially the loads of these two trees sum up to more than $1$. This means that on average they have total load at least $\frac{1}{2}$. Since every other tree has total load at least $\frac{1}{2}$, we have created at most $2\cdot(b(V(T))+u(E(T)))$ trees this way.
\end{proof}

\subsection{The general structure of the LP solution}
\label{subsec:lp_structure}
Let $x,y$ be a solution found by the algorithm. Recall that for edges $e\in E(G)$, $u_x(e):=x(e)\cdot u(e)$ was the fractional load of the edge $e$ in our solution. Note that then it holds for each set $A\subseteq V(G)$ and edge $e\in E(G)$ that \[y(E(G[A]))=x(E(G[A]))+u_x(E(G[A]))\text{ and }y(e)=x(e)+x(e)u(e).\]
Without loss of generality, we can assume that $0<x(e)$ for all edges. We simply remove all edges with $x(e)=0$. Then we contract all inclusionwise maximal sets $A\in\mathcal{A}$ such that all edges in their respective induced support graph are tight and set the load of the new vertex to $b(A)+u_x(E(G[A]))$. This only makes the approximation guarantee worse, because these components will have the same value in the rounded solution as in the LP-solution.
Corollary \ref{cor:tight} implies that all remaining edges $e\in E$ with $x(e)=1$, are edges with load $u(e)=0$.
In the remaining graph the following assertions hold:
\begin{enumerate}
 \item $|\{v\}|-b(\{v\})-y(E(G[\{v\}]))=1-b(v)\leq1$ for all $v\in V$.
 \item All $A\in\mathcal{A}$ containing more than $1$ vertex are tight, by Lemma \ref{lem:tight_edges}.
\end{enumerate}
Now, we will take a closer look at the sets $A_i^j$ for $i=1,\ldots,m$ and $j=1,2$. In the following analysis, we will assume without loss of generality that $|A_i^1|\leq |A_i^2|$ and $\sigma(A_i^1)\geq\sigma(A_i^2)$.
By the above assertions, we have for an edge $e_i$ and the two associated sets $A_i^1,A_i^2$, either
\begin{enumerate}
 \item[(i)] both $A_i^1$ and $A_i^2$ contain only one vertex and one of them ($A_i^1$) is not tight, or
 \item[(ii)] $A_i^1$ contains only one vertex and is not tight and $A_i^2$ contains more vertices and is tight
\end{enumerate}
To make the following easier to read, we add the following definitions
\begin{defn}
 Edges that fulfill condition (i) are called \emph{seed edges} and edges that fulfill condition (ii) are called \emph{extension edges}.
For each edge $e_i$ we denote by $v_{e_i}$ the unique vertex in set $A_i^1$.
\end{defn}
Note that every edge $e_i$ is either a seed edge or an extension edge, but this only holds after contracting the sets of tight edges as described above.

Thus, whenever $e_i$ is a seed edge, the algorithm sets \[y(e_i):=|A_i^1|-b(A_i^1)-y(E(G[A_i^1]))+|A_i^2|-b(A_i^2)-y(E(G[A_i^2]))=1-b(A_i^1)+1-b(A_i^2),\] where we use the fact that $E(G[A_i^j])=\emptyset$ for $j=1,2$. Since both $A_i^1$ and $A_i^2$ were singletons, we can conclude \[x(e_i)+u(e_i)x(e_i)=y(e_i)=2-b(A_i^1\cup A_i^2).\] Similarly, for extension edges, we get \[x(e_i)+u(e_i)x(e_i)=1-b(A_i^1).\]
In the analysis of the rounding step, we need some further observations:
\begin{observ}
 Let $T$ be a connected component in $G_x$. Then
 \begin{itemize}
  \item $T$ is a tree.
  \item If $|V(T)|>1$, then $T$ contains exactly one seed edge and all other edges are extension edges.
  \item If $T$ contains a seed edge $e_i$, then $i=\min\limits_{e_j\in E(T)} j$ or in other words, $e_i$ was the first edge of $T$ considered in the algorithm.
 \end{itemize}
\end{observ}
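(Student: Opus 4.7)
The first bullet is immediate: the family $\mathcal{A}$ produced by Algorithm~\ref{algo:lpsolve} is laminar, which (as noted right after the algorithm) forces the support graph $G_x$ to be a forest; hence every connected component $T$ of $G_x$ is a tree.

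For the remaining two bullets I first establish a preliminary claim: in the contracted setting, every edge $e_i$ with $x(e_i)>0$ is either a seed edge or an extension edge. Assume WLOG $|A_i^1|\le|A_i^2|$. If $|A_i^1|\ge 2$, then by assertion 2 both $A_i^1$ and $A_i^2$ are tight, so $\sigma(A_i^1)=\sigma(A_i^2)=0$ and line 5 of Algorithm~\ref{algo:lpsolve} would set $y(e_i)=0$, contradicting $x(e_i)>0$. Thus $A_i^1$ is a singleton, and by assertion 1 not tight. If $|A_i^2|=1$ the edge is a seed edge, while if $|A_i^2|>1$ then $A_i^2$ is tight by assertion 2 and the edge is an extension edge.

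I then prove the second and third bullets jointly by strong induction on $|A|$ applied to every $A\in\mathcal{A}$ with $|A|>1$, viewing $G_x[A]$ as the analogue of $T$; specializing to the top-level sets of $\mathcal{A}$, which are exactly the vertex sets of the connected components of $G_x$, yields the statement for $T$. Let $e_{i_k}$ be the last edge processed by the algorithm that lies in $G_x[A]$; this is precisely the edge whose processing triggered the merger creating $A$, so $A=A_{i_k}^1\cup A_{i_k}^2$. By the preliminary claim $e_{i_k}$ is a seed or an extension edge. In the seed case both $A_{i_k}^1,A_{i_k}^2$ are singletons, so $|A|=2$ and $G_x[A]=\{e_{i_k}\}$, and the assertion is immediate. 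In the extension case $A_{i_k}^2\in\mathcal{A}$ satisfies $|A_{i_k}^2|=|A|-1\ge 2$, and every edge of $G_x[A]\setminus\{e_{i_k}\}$ is processed before $e_{i_k}$, hence has both endpoints on the same side of the partition $\{A_{i_k}^1,A_{i_k}^2\}$; since $A_{i_k}^1$ is a singleton this forces $G_x[A_{i_k}^2]=G_x[A]\setminus\{e_{i_k}\}$. The induction hypothesis applied to $A_{i_k}^2$ says that $G_x[A_{i_k}^2]$ contains exactly one seed edge, namely its first one in the algorithm's order, and all others are extensions; appending the strictly later extension edge $e_{i_k}$ gives the corresponding statement for $A$.

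The main obstacle is the preliminary claim, as this is the step that tightly combines assertions 1 and 2 with line 5 of the algorithm; once it is in place the induction is essentially bookkeeping, and the only detail to watch is ensuring that $G_x[A_{i_k}^2]$ really coincides with $G_x[A]\setminus\{e_{i_k}\}$ so that the induction hypothesis applies to the entire sub-support-tree inside $A_{i_k}^2$.
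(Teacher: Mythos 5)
Your proof is correct, and it follows a genuinely different route than the paper's. The paper proves the second bullet by contradiction: assuming two seed edges $e,f$ in a component $T$, it looks at the smallest set $C=A_i^1\dot\cup A_i^2$ in $\mathcal A$ containing both size-two sets created by $e$ and $f$, notes that then $|A_i^1|,|A_i^2|\ge 2$ so both are tight (assertion 2), hence $y(e_i)=0$, a contradiction; the third bullet is then a one-line observation (the first edge of $T$ must have both sides singletons, so it is a seed edge). You instead do a structural induction over the laminar family $\mathcal A$, peeling off the last edge that created each set $A$ and using the fact that $|A_{i_k}^1|=1$ forces $G_x[A]\setminus\{e_{i_k}\}=G_x[A_{i_k}^2]$. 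Your approach buys a more explicit picture of how a tree is built up one extension edge at a time from a single seed, at the cost of needing the careful argument that earlier active edges of $G_x[A]$ cannot straddle the two parts $A_{i_k}^1,A_{i_k}^2$ (which you handle correctly via laminarity). The paper's argument is shorter but leaves the constructive structure implicit.

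One small imprecision, shared with the paper's own exposition: in your preliminary claim you cite ``assertion 1'' to conclude that the singleton $A_i^1$ is not tight, but assertion 1 as stated is $1-b(v)\le 1$, which is vacuous. After contraction a singleton can in fact be tight (a contracted set with $b(A)+u_x(E(G[A]))=1$). The correct reason that $A_i^1$ is not tight is the paper's labelling convention $\sigma(A_i^1)\ge\sigma(A_i^2)$ together with $y(e_i)>0$: if $A_i^1$ were tight then both sides would have zero slack and line~5 would set $y(e_i)=0$. This does not affect the correctness of your argument since the conclusion holds, but the citation should be fixed.
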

\begin{proof}
 As $G_x$ is a forest, each connected component must be a tree.\\
 Suppose there were two seed edges $e,f\in E(T)$ ($e\neq f$). Let $A$ be the set that the algorithm creates, when increasing $e$, and $B$ the same for $f$. By the definition of seed edges, $|A|=|B|=2$. Let $C=A_i^1\dot{\cup} A_i^2$ be the smallest set in the laminar family $\mathcal{A}$ with $A,B\subset C$. Then $A\subseteq A_i^k$ and $B\subseteq A_i^{3-k}$ for some $k\in \{1,2\}$.
 Hence $|A_i^j|\geq 2$ for $j=1,2$ and both must be tight. Thus $y(e_i)=0$, contradicting the fact that Algorithm \ref{algo:lpsolve} has joined $A_i^1$ and $A_i^2$. This implies that $T$ can only contain one seed edge.\\
 Let $|V(T)|>1$ and $e_i$ be the first edge in $T$ that was considered during the algorithm. Then $|A_i^j|=1$ for $j=1,2$. So by definition, it is a seed edge.
\end{proof}

\subsection{Analyzing the rounding step}
\label{sec:rounding_analysis}
 First note that by our rounding procedure, the sum of the edge-weights can increase by at most $\frac{1}{\alpha}$. So for the edge-weights it is sufficient to make sure that $\alpha\geq\frac{1}{2}$ and the main difficulty is to bound the number of components.
 
 Before we choose $\alpha$, let us estimate how many components we get after rounding and splitting. To do this, we take a look at the connected components after rounding. Let $T$ be such a component. We denote by $\comp(T)$ the number of connected components we need to split $T$ into. 
 
 Let $C^*$ be the set of components before splitting and $C$ be the set of components after splitting. Our goal here is to estimate $|C|$ by a contribution $\est(e)$ of each edge $e\in E(G)$, such that the number of components after splitting is
\begin{align*}
 |C|=\sum\limits_{T\in C^*}\comp(T)\leq\sum\limits_{T\in C^*}|V(T)|-\sum\limits_{e\in E(G)}\est(e)=|V(G)|-\sum\limits_{e\in E(G)}\est(e)
\end{align*}
 There are three cases:
\begin{enumerate}
 \item \emph{singletons}: $T$ consists of only one vertex.
 \item \emph{good trees}: $T$ consists of more than one vertex and $u(E(T))+b(V(T))\leq1$
 \item \emph{large trees}: $T$ consists of more than one vertex and $u(E(T))+b(V(T))>1$
\end{enumerate}$ $\\
\textbf{Case 1}: $T$ is a singleton. Its number of components is \[\comp(T):=1=|V(T)|-0.\]
\textbf{Case 2}: $T$ is a good tree. So we can keep this component for a solution to the problem. The number of components is
\[\comp(T):=1=|V(T)|-(|V(T)|-1)\leq|V(T)|-\sum\limits_{e\in E(T)}\left[1-2b(v_e)-2u(e)\right].\]
For all $e\in E(T)$, we set $\est(e):=1-2b(v_e)-2u(e)$.\\\\
\textbf{Case 3}: $T$ is a large tree. So we have to split this component to get a feasible solution.
Denote by $e'$ the edge in $T$ with the lowest index according to the sorting of the algorithm. Let $\bar{v}\neq v_{e'}$ be incident to $e'$. Note that this does not have to be a seed edge, as the components after rounding do not necessarily contain a seed edge. We rewrite the number of components:
\begin{align*}
\comp(T)&\leq2\cdot(u(E(T))+b(V(T)))\\
&=|V(T)|-\left[2-2b(v_{e'})-2u(e')-2b(\bar{v})\right]-\sum\limits_{e'\neq e\in E(T)}\left[1-2b(v_e)-2u(e)\right].
\end{align*}
If $T$ contains a seed edge, then this edge is $e'$. This means that the number of components can be estimated by edges in $T$. We set $\est(e'):=2-2b(v_{e'})-2u(e')-2b(\bar{v})$ and $\est(e):=1-2b(v_e)-2u(e)$ for all $e\in E(T)\setminus\{e'\}$.

Otherwise $T$ only consists of extension edges. In this case, we write
\begin{multline*}\left[2-2b(v_{e'})-2u(e')-2b(\bar{v})\right]+\sum\limits_{e'\neq e\in E(T)}\left[1-2b(v_e)-2u(e)\right]\\
 =[1-2b(\bar{v})]+\sum\limits_{e\in E(T)}\left[1-2b(v_e)-2u(e)\right].
\end{multline*}
Then, we set $\est(e):=1-2b(v_{e})-2u(e)$ for all $e\in E(T)$. However, in this case we need to account for the additional $1-2b(\bar{v})$. To do so, we call the edge incident to $\bar{v}$ that is not contained in $T$ a \emph{filler edge}. For all filler edges $\{v,w\}=e\in E(G)$, we w.l.o.g. assume that $e$ is a filler edge for the component that contains $v$ and set
\[\est(e):=\begin{cases}2-(b(v)+b(w)),&\text{if } e \text{ is the filler edge of two components}\\1-b(v),&\text{ otherwise.} \end{cases}\]
For all edges not considered before, we set $\est(e):=0$.

Now we have that
\[|C|\leq|V(G)|-\sum\limits_{e\in E(G)}\est(e)\]
$ $\\
Our next goal is to find a lower bound on $\sum\limits_{e\in E(G)}\est(e)$.
\subsubsection{Lower bounds for the extension edges}
We start with the simpler case of extension edges. An overview over the cases in which they can appear is shown in Figure \ref{fig:extension_edge_appearances}. Let $e$ be an extension edge. If it appears inside a good tree or a large tree. Then
\begin{align*}
 \est(e)&=1-2b(v_e)-2u(e)\\
 &=1-2(1-x(e)-x(e)u(e)+u(e))\\
 &=1-2+2x(e)+2x(e)u(e)-2u(e)\\
 &=2x(e)-1-2u(e)(1-x(e)).
\end{align*}
If it is incident to a singleton or a good tree, we can estimate
\[\est(e)=0\geq2x(e)-1-(2x(e)-1).\]
If it is a filler edge, we can estimate
\[\est(e)=1-2b(v_e)=1-2(1-x(e)-x(e)u(e))=2x(e)-1+x(e)u(e)\geq2x(e)-1.\]

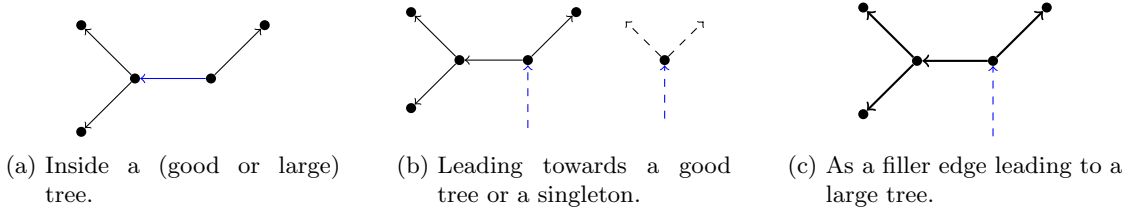
\begin{figure}
\begin{subfigure}[t]{0.3\textwidth}
\centering
   \begin{tikzpicture}
  \node[circle,fill=black,scale=0.4] (a) at (0,0){};
  \node[circle,fill=black,scale=0.4] (b) at (-0.707,0.707){};
  \node[circle,fill=black,scale=0.4] (c) at (-0.707,-0.707){};
  \node[circle,fill=black,scale=0.4] (d) at (1,0){};
  \node[circle,fill=black,scale=0.4] (e) at (1.707,0.707){};
 
  \draw[->] (a) -- (b);
  \draw[->] (a) -- (c);
  \draw[->,color=blue] (d) -- (a);
  \draw[->] (d) -- (e);
 \end{tikzpicture}
 \caption{Inside a (good or large) tree.}
\end{subfigure}
\hfill
\begin{subfigure}[t]{0.3\textwidth}
\centering
 \begin{tikzpicture}[scale=0.9]
  \node[circle,fill=black,scale=0.4] (a) at (0,0){};
  \node[circle,fill=black,scale=0.4] (b) at (-0.707,0.707){};
  \node[circle,fill=black,scale=0.4] (c) at (-0.707,-0.707){};
  \node[circle,fill=black,scale=0.4] (d) at (1,0){};
  \node[circle,fill=black,scale=0.4] (e) at (1.707,0.707){};
 
  \draw[->] (a) -- (b);
  \draw[->] (a) -- (c);
  \draw[->] (d) -- (a);
  \draw[->] (d) -- (e);
  \draw[dashed,->,color=blue] (1,-1) -- (d);
  
  \node[circle,fill=black,scale=0.4] (k) at (3,0){};
  \node (l) at (2.293,0.707){};
  \node (m) at (3.707,0.707){};
  \node (n) at (3,-1){};
  
  \draw[dashed,->] (k) -- (l);
  \draw[dashed,->] (k) -- (m);
  \draw[dashed,->, color=blue] (n) -- (k);
 \end{tikzpicture}
 \caption{Leading towards a good tree or a singleton.}
\end{subfigure}
\hfill
\begin{subfigure}[t]{0.3\textwidth}
\centering
 \begin{tikzpicture}
  \node[circle,fill=black,scale=0.4] (a) at (0,0){};
  \node[circle,fill=black,scale=0.4] (b) at (-0.707,0.707){};
  \node[circle,fill=black,scale=0.4] (c) at (-0.707,-0.707){};
  \node[circle,fill=black,scale=0.4] (d) at (1,0){};
  \node[circle,fill=black,scale=0.4] (e) at (1.707,0.707){};
 
  \draw[thick,->] (a) -- (b);
  \draw[thick,->] (a) -- (c);
  \draw[thick,->] (d) -- (a);
  \draw[thick,->] (d) -- (e);
  \draw[dashed,->,color=blue] (1,-1) -- (d);
 \end{tikzpicture}
 \caption{As a filler edge leading to a large tree.}
\end{subfigure}
 \caption{The cases in which extension edges can occur. Dashed edges have been rounded down, while solid ones have been rounded up. Thick edges belong to a large tree. For each edge $e$ the arrowhead points towards $v_e$.}
 \label{fig:extension_edge_appearances}
\end{figure}

\subsubsection{Lower bounds for the seed edges}
Next we consider seed edges. An overview over the cases in which they can appear is shown in Figure \ref{fig:seed_edge_appearances}. Let $e$ be a seed edge. $e=\{v_e,\bar{v}\}$ can not be contained in a singleton. It can only be contained in a good tree, if $e$ is tight, as otherwise we have \[1+u(e)>y(e)=1-b(v_e)+1-b(\bar{v})\Leftrightarrow b(v_e)+b(\bar{v})+u(e)>1.\]
If it is tight, we are in the second case of Corollary \ref{cor:tight} and $u(e)=0$. Recall that then one of the sets $A_i^1$, $A_i^2$ was tight for $e=e_i$. By our labelling this was $A_i^2=\bar{v}$. So, we know that $b(\bar{v})=1$.
We estimate
\begin{align*}
 \est(e)&=1-2b(v_{e})-2u(e)\\
 &=1-2(2-x(e)-x(e)u(e)-b(\bar{v})+u(e))\\
 &=1-4+2x(e)+2b(\bar{v})\\
 &=2x(e)-3+2b(\bar{v})\\
 &\geq2x(e)-2.
\end{align*}
If it is contained a large tree, it was the first edge considered in this component. We estimate 
\begin{align*}
 \est(e)&=2-2b(v_{e})-2u(e)-2b(\bar{v})\\
 &=2-2(2-x(e)-x(e)u(e)+u(e))\\
 &=2-4+2x(e)+2x(e)u(e)-2u(e)\\
 &=2x(e)-2-2u(e)(1-x(e)).
\end{align*}
Otherwise both endpoints are incident to different components. This means that it was rounded down. If these components are singletons or good trees, we can estimate
\[\est(e)=0\geq2x(e)-2.\]
If both are large trees, then e is a filler edge for both and we have
\[\est(e)=2-2(b(v_e)+b(\bar{v}))=2-2(2-x(e)-x(e)u(e))=2x(e)-2+2u(e)x(e)\geq2x(e)-2.\]
The last case is that $e$ is incident to one large tree and a good tree or a singleton. This means it is a filler edge for only one endpoint. W.l.o.g. let this endpoint be $v_e$. We set $y_1:=(1+u(e))$, $x_1:=1-b(v_e)$ and $y_2:=(1+u(e))$, $x_2:=1-b(\bar{v})$. For a later estimate note that then $x_2\leq\alpha$ as $x(e)\leq\alpha$.
We can estimate 
\begin{align*}
\est(e)&=1-2b(v_e)\\
&=1-2(1-x_1-u(e)x_1)\\
&=1-2+2x_1+2u(e)x_1\\
&=2x_1-1+2u(e)x_1\\
&=2x_1+2x_2-2+ 2u(e)x_1+1-2x_2\\
&\geq2x(e)-2-(2x_2-1).
\end{align*}
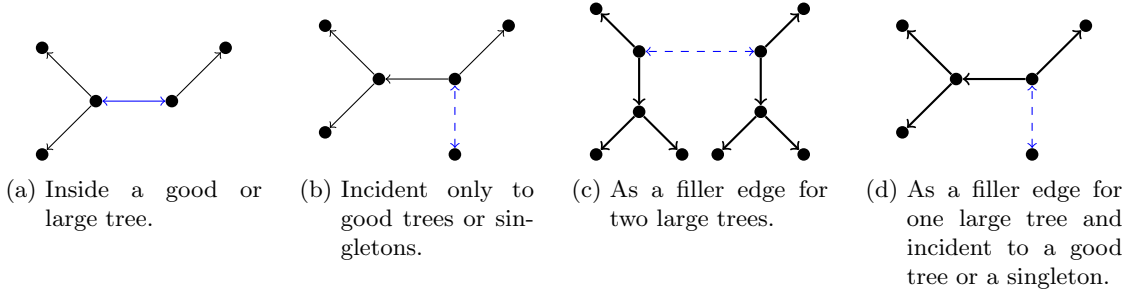
\begin{figure}
\begin{subfigure}[t]{0.23\textwidth}
\centering
  \begin{tikzpicture}
  \node[circle,fill=black,scale=0.5] (a) at (0,0){};
  \node[circle,fill=black,scale=0.5] (b) at (-0.707,0.707){};
  \node[circle,fill=black,scale=0.5] (c) at (-0.707,-0.707){};
  \node[circle,fill=black,scale=0.5] (d) at (1,0){};
  \node[circle,fill=black,scale=0.5] (e) at (1.707,0.707){};
 
  \draw[,->] (a) -- (b);
  \draw[,->] (a) -- (c);
  \draw[,<->,color=blue] (d) -- (a);
  \draw[,->] (d) -- (e);
 \end{tikzpicture}
 \caption{Inside a good or large tree.\vspace{3\baselineskip}}
\end{subfigure}
\hfill
\begin{subfigure}[t]{0.21\textwidth}
\centering
  \begin{tikzpicture}
  \node[circle,fill=black,scale=0.5] (a) at (0,0){};
  \node[circle,fill=black,scale=0.5] (b) at (-0.707,0.707){};
  \node[circle,fill=black,scale=0.5] (c) at (-0.707,-0.707){};
  \node[circle,fill=black,scale=0.5] (d) at (1,0){};
  \node[circle,fill=black,scale=0.5] (e) at (1.707,0.707){};
  \node[circle,fill=black,scale=0.5] (f) at (1,-1){};
 
  \draw[->] (a) -- (b);
  \draw[->] (a) -- (c);
  \draw[->] (d) -- (a);
  \draw[->] (d) -- (e);
  \draw[dashed,<->,color=blue] (f) -- (d);
   \end{tikzpicture}
   \caption{Incident only to good trees or singletons.\vspace{\baselineskip}}
\end{subfigure}
\hfill
\begin{subfigure}[t]{0.23\textwidth}
\centering
  \begin{tikzpicture}[rotate=90,scale=0.8]
    \node[circle,fill=black,scale=0.5] (a) at (0,0){};
  \node[circle,fill=black,scale=0.5] (b) at (-0.707,0.707){};
  \node[circle,fill=black,scale=0.5] (c) at (-0.707,-0.707){};
  \node[circle,fill=black,scale=0.5] (d) at (1,0){};
  \node[circle,fill=black,scale=0.5] (e) at (1.707,0.707){};
    \node[circle,fill=black,scale=0.5] (f) at (0,-2){};
  \node[circle,fill=black,scale=0.5] (g) at (-0.707,-1.293){};
  \node[circle,fill=black,scale=0.5] (h) at (-0.707,-2.707){};
  \node[circle,fill=black,scale=0.5] (i) at (1,-2){};
  \node[circle,fill=black,scale=0.5] (j) at (1.707,-2.707){};
 
  \draw[thick,->] (a) -- (b);
  \draw[thick,->] (a) -- (c);
  \draw[thick,->] (d) -- (a);
  \draw[thick,->] (d) -- (e);
  \draw[dashed,<->,color=blue] (i) -- (d);
  \draw[thick,->] (f) -- (g);
  \draw[thick,->] (f) -- (h);
  \draw[thick,->] (i) -- (f);
  \draw[thick,->] (i) -- (j);
  \end{tikzpicture}
  \caption{As a filler edge for two large trees.\vspace{2\baselineskip}}
\end{subfigure}
\hfill
\begin{subfigure}[t]{0.23\textwidth}
\centering
  \begin{tikzpicture}
  \node[circle,fill=black,scale=0.5] (a) at (0,0){};
  \node[circle,fill=black,scale=0.5] (b) at (-0.707,0.707){};
  \node[circle,fill=black,scale=0.5] (c) at (-0.707,-0.707){};
  \node[circle,fill=black,scale=0.5] (d) at (1,0){};
  \node[circle,fill=black,scale=0.5] (e) at (1.707,0.707){};
  \node[circle,fill=black,scale=0.5] (f) at (1,-1){};
 
  \draw[thick,->] (a) -- (b);
  \draw[thick,->] (a) -- (c);
  \draw[thick,->] (d) -- (a);
  \draw[thick,->] (d) -- (e);
  \draw[dashed,<->,color=blue] (f) -- (d);
   \end{tikzpicture}
   \caption{As a filler edge for one large tree and incident to a good tree or a singleton.}
\end{subfigure}
 \caption{The cases in which seed edges can occur. Dashed edges have been rounded down, while solid ones have been rounded up. Thick edges belong to a large tree. For each extension edge $e$ the arrowhead points towards $v_e$. Seed edges have arrowheads on both ends.}
 \label{fig:seed_edge_appearances}
\end{figure}
Now almost all estimates are of the same form. 

\subsubsection{Summary of the estimates}
\label{sec:estimate_summary}
Before we choose $\alpha$ and derive the approximation guarantee, let us summarize the derived estimates.
\begin{align*}
\est(e)\geq\\
&&\text{\textbf{seed edges}}\\
&2x(e)-2-\textcolor{blue}{0}&\text{incident to good trees or singletons,}\\
&&\text{in a good tree, filler for both ends}\\
&2x(e)-2-\textcolor{blue}{(2x_2-1)}&\text{filler for one end}\\
&2x(e)-2-\textcolor{blue}{2u(e)(1-x(e))}&\text{in a large tree}\\\\
&&\text{\textbf{extension edges}}\\
&2x(e)-1-\textcolor{blue}{0}&\text{filler edge}\\
&2x(e)-1-\textcolor{blue}{(2x(e)-1)}&\text{incident to good tree or singleton}\\
&2x(e)-1-\textcolor{blue}{2u(e)(1-x(e))}&\text{inside a component}
\end{align*}
The base part, which is left in black, now sums up to at most $2x(E(G))-|V(G)|$, because there is exactly one seed edge for every component in $G$ that is not a singleton. So it remains to estimate the parts marked in blue.
Our goal will be to estimate this part in terms of $|V(G)|-x(E(G))$. That is, find a $\beta$, such that we have $\text{\textcolor{blue}{``sum of blue parts''}}\leq\beta(|V(G)|-x(E(G))$.
We will achieve this by first estimating for each $\{v_e,w\}\in E(G_{x^*})$
that \[\text{\textcolor{blue}{``blue part''}}\leq
\begin{cases}
\beta(b(v_e)+b(w)+x(e)u(e))&\text{for seed edges}\\
\beta(b(v_e)+x(e)u(e))&\text{for extension edges}.
\end{cases}
\]
Then, we can use that to sum up the estimates of the differences 
\[\text{\textcolor{blue}{``sum of blue parts''}}\leq \beta(b(V(G))+u(x(E(G))))\leq\beta(|V(G)|-x(E(G))),\]
where the last inequality follows directly from the LP-inequalities.

In total, we are left with
\begin{align*}|C|&\leq |V(G)|-\sum\limits_{e\in E(G)}\est(e)\\
&\leq |V(G)| - (2x(E(G))-|V(G)|-\beta(|V(G)|-x(E(G))))\\
&= (2+\beta)(|V(G)|-x(E(G)))
\end{align*}

\subsubsection{A first approximation guarantee}
\label{sec:first_approx}
Let $\umax:=\max\limits_{e\in E(G)}u(e)$
\begin{lem}
If we set $\alpha:=\frac{1}{2}$, the number of components after splitting is at most \[(2+2\umax)(|V(G)|-x(E(G))).\]
\end{lem}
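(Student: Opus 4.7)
My approach is to leverage the framework already assembled in Section~\ref{sec:estimate_summary}: once a constant $\beta$ is exhibited for which each of the six blue-part bounds is dominated by $\beta(b(v_e)+b(\bar{v})+x(e)u(e))$ (seed case) or $\beta(b(v_e)+x(e)u(e))$ (extension case), the lemma follows directly from the chain
\[|C| \;\leq\; |V(G)| - \sum_{e\in E(G)}\est(e) \;\leq\; (2+\beta)(|V(G)|-x(E(G))).\]
The entire proof therefore reduces to verifying that $\beta := 2\umax$ is admissible when $\alpha = \tfrac{1}{2}$.

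I would organize the verification by the shape of the blue part. The two cases in which the blue part equals $2u(e)(1-x(e))$ (a seed edge inside a large tree, an extension edge inside a component) are handled by substituting the identities from Section~\ref{subsec:lp_structure}. For seed edges these give $b(v_e)+b(\bar{v})+x(e)u(e) = 2 - x(e)$, so the target inequality becomes $u(e)(1-x(e)) \leq \umax(2-x(e))$, which is immediate since $u(e)\leq\umax$ and $1-x(e)\leq 2-x(e)$. For extension edges the identities give $b(v_e)+x(e)u(e) = 1-x(e)$, and the target becomes $u(e)(1-x(e))\leq\umax(1-x(e))$, equally immediate.

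The remaining nontrivial cases are the two where the blue part has the form $2x(e)-1$ or $2x_2-1$. Both arise only for edges that have been rounded down, so $x(e)\leq\alpha=\tfrac{1}{2}$. For the seed-edge-as-filler-for-one-end case one checks from the definitions of $x_1,x_2$ in Section~\ref{sec:rounding_analysis} that $x_1+x_2 = y(e)/(1+u(e)) = x(e)$, which forces $x_2 \leq x(e) \leq \alpha$. Hence both $2x(e)-1$ and $2x_2-1$ are non-positive, and the required inequality is trivially satisfied (as are the two blue-part-$0$ cases).

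Assembling the six sub-checks shows that $\beta = 2\umax$ works, and plugging this back into the bound of Section~\ref{sec:estimate_summary} yields the claimed estimate $(2+2\umax)(|V(G)|-x(E(G)))$ on the number of components after splitting. I expect the only real obstacle to be keeping the bookkeeping straight across the many subcases of seed/extension and large/good/singleton neighborhoods; each individual algebraic check is a one- or two-line substitution.
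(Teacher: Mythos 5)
Your proof is correct and follows essentially the same route as the paper: you fix $\beta = 2\umax$, verify each of the six blue-part bounds against $\beta(b(v_e)+b(\bar v)+x(e)u(e))$ or $\beta(b(v_e)+x(e)u(e))$ using the seed/extension identities together with the fact that rounded-down edges have $x(e)\le\alpha=\tfrac12$, and then plug this back into the $|C|\le(2+\beta)(|V(G)|-x(E(G)))$ framework of Section~\ref{sec:estimate_summary}. Your reading of $x_1,x_2$ in the filler-for-one-end case as the split $x_j=(1-b(\cdot))/(1+u(e))$ (so $x_1+x_2=y(e)/(1+u(e))=x(e)$, whence $x_2\le x(e)\le\alpha$) is the correct interpretation, and is exactly what the paper's subsequent algebra relies on.
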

\begin{proof}
 We will show the following claim:
 
 If $\alpha=\frac{1}{2}$, then \[\est(e)\geq2x(e)-2-2\umax(b(v)+b(w)+x(e)u(e))\] for all seed edges $e=\{v,w\}$ and \[\est(e)\geq2x(e)-1-2\umax(b(v)+u(e)x(e))\] for all extension edges $e=\{v,w\}$.
 
 These estimates are exactly of the form, we required in Section \ref{sec:estimate_summary}. Thus, they directly imply the statement of the lemma.

 First we observe that $-u(e)(1-x(e))\geq-\umax(1-x(e))$. Second, that for seed edges $e=\{v,w\}$, we have \[1-x(e)=b(v)+b(w)+u(e)x(e)-1\leq b(v)+b(w)+u(e)x(e)\] and for extension edges $e=\{v_e,w\}$, \[1-x(e)=b(v_e)+u(e)x(e).\] Furthermore of course $0\geq-2\umax(b(v)+b(w)+u(e)x(e))$ and $0\geq-2\umax(b(v_e)+u(e)x(e))$. With these observations the claim already holds for most edges. It remains to show that it holds for extension edges $e$ that are incident to good trees or singletons. These edges are rounded down, as they are in no component. This means $x(e)\leq\frac{1}{2}$. We have \[\est(e)\geq2x(e)-1-(2x(e)-1)\geq2x(e)-1\geq2x(e)-1-2\umax(b(v_e)+u(e)x(e)).\] And we need to show the inequality for seed edges that are filler edges for only one large tree. These are also rounded down. There we estimated $\est(e)\geq2x(e)-2-(2x_2-1)$. With $x_2\leq\frac{1}{2}$, we get \[\est(e)\geq2x(e)-2\geq2x(e)-2-2\umax(b(v)+b(w)+u(e)x(e)).\] This proves the claim.
\end{proof}

\subsubsection{A general approximation guarantee}
\label{sec:general_approx}
However, these estimates were not optimal, as we may choose $\alpha$ in a better way. Specifically, we will choose $\alpha:=\frac{2}{3}$ to achieve a 3-approximation.

We want to determine $\beta\leq1$ optimal, such that we get a $2+\beta$-approximation. The only step, that required $\alpha\leq\frac{1}{2}$ in the previous proof was to estimate $-(2x(e)-1)\geq0$ for edges that were not rounded up. However, if we want a $2+\beta$-approximation, we may choose $\alpha$, such that $-(2x(e)-1)\geq-\beta(1-x(e))$ (for edges that are rounded down).
\begin{align*}
 && -(2x(e)-1)&\geq-\beta(1-x(e))\\
 \Leftrightarrow&& 1-2x(e)&\geq-\beta+\beta x(e)\\
 \Leftrightarrow&&1+\beta&\geq(2+\beta)x(e)\\
 \Leftrightarrow&&\frac{1+\beta}{2+\beta}&\geq x(e)
\end{align*}
This means for all edges $e$ that we round down, we need \[x(e)\leq\frac{1+\beta}{2+\beta}.\]
So we set $\alpha:=\frac{1+\beta}{2+\beta}$. Now we know for all edges $e$ that we round up, $x(e)\geq\frac{1+\beta}{2+\beta}$. In this case we have $1-x(e)\leq\frac{1}{2+\beta}$. So we can estimate
\[x(e)\geq\frac{1+\beta}{2+\beta}\geq(1+\beta)(1-x(e)).\]
Then we can use this to estimate
\[-2u(e)(1-x(e))\geq-\frac{2}{1+\beta}u(e)x(e),\]
as the only edges, where the term $-2u(e)(1-x(e))$ appeared, were edges that we rounded up.
Of course then for seed edges this implies \[-\frac{2}{1+\beta}u(e)x(e)\geq-\frac{2}{1+\beta}(u(e)x(e)+b(v)+b(w))\] and for extension edges \[-\frac{2}{1+\beta}u(e)x(e)\geq -\frac{2}{1+\beta}(u(e)x(e)+b(v_e)).\]
For a $2+\beta$-approximation, we need
\[\frac{2}{1+\beta}=\beta\Leftrightarrow\beta\in\{-2,1\}.\]
Since $\beta\geq0$, we choose $\beta:=1$ and hence $\alpha:=\frac{2}{3}$ for a 3-approximation.

\section{The integrality gap of the LP}
\label{sec:integrality_gap}
We will now prove that the integrality gap of the LP is $3$. This means that using the approach discussed here, we can not achieve a better approximation guarantee.
\begin{thm}
 The integrality gap of the LP-relaxation given in Section \ref{sec:lp} is at least $3$.
\end{thm}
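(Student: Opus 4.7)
The plan is to exhibit a family of star instances whose integrality gap approaches $3$. Take $n+1$ vertices: a center $c$ and leaves $v_1,\dots,v_n$ with $b\equiv 0$. Assign to each spoke $cv_i$ cost $1$ and edge load $\tfrac{1}{2}+\epsilon$, and let all remaining edges inherit the shortest-path metric, so $c(v_iv_j)=2$ and $u(v_iv_j)=1+2\epsilon$. Choose the facility cost $\gamma\geq 2$. The metric property of $c$ and $u$, the monotonicity implication $u(e)<u(f)\Rightarrow c(e)\leq c(f)$, and the bound $\gamma\geq c(e)$ for every edge all hold by construction.

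For the integer optimum I would show a lower bound of $1+n\gamma$. No leaf--leaf edge can be used, since its load $1+2\epsilon$ already exceeds $1$. No tree can contain two spokes either, because their combined edge load would be $1+2\epsilon>1$. Hence the center belongs to a tree with at most one leaf, the remaining $n-1$ leaves are singletons, and $\opt \geq 1+n\gamma$.

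For the LP, I would present the symmetric fractional solution $x(cv_i)=\tfrac{2(n+1)}{n(3+2\epsilon)}$ on each spoke and $0$ elsewhere, check that the load inequality~(\ref{ineq:load}) is tightest at $A=V$ and is satisfied there with equality (the other subset and bound constraints are routine once $n$ is large enough), and compute
\[
\optlp \;\leq\; \frac{2(n+1)}{3+2\epsilon}+\gamma\Bigl(n+1-\frac{2(n+1)}{3+2\epsilon}\Bigr)\;=\;(n+1)\cdot\frac{2+\gamma(1+2\epsilon)}{3+2\epsilon}.
\]
Combining the two bounds gives
\[
\frac{\opt}{\optlp} \;\geq\; \frac{(1+n\gamma)(3+2\epsilon)}{(n+1)\bigl(2+\gamma(1+2\epsilon)\bigr)},
\]
and sending $\epsilon\to 0$ and then $n,\gamma\to\infty$ drives this ratio to $3$.

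The main obstacle is convincing the reader that my spoke-only LP candidate is genuinely a good upper bound on $\optlp$ — one might worry that the LP secretly prefers mixing in some leaf--leaf edges. I would dismiss this by a per-unit-$x(e)$ comparison: spokes contribute less load (factor $\tfrac32+\epsilon$ versus $2+2\epsilon$) \emph{and} less cost (1 versus 2) than leaf--leaf edges, so under the binding constraint (\ref{ineq:load}) at $A=V$, replacing any leaf--leaf weight by spoke weight preserves feasibility and strictly improves the objective. Thus the proposed solution is optimal, and the integrality gap argument goes through.
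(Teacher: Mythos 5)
Your proof is correct and uses essentially the same star gadget as the paper: spokes with load just over $\tfrac12$ so no two spokes can coexist in an integral tree, while the LP can set each spoke to roughly $\tfrac23$ and pay a third of the facility cost. The paper achieves the expensive integral optimum with vertex loads ($b(C)=1-\epsilon$, $b(v_i)=\epsilon$) and $c\equiv 0$, $\gamma=1$, which avoids your third limit $\gamma\to\infty$; your variant with edge costs and $b\equiv 0$ is equally valid once the limits are coupled into one sequence (e.g.\ $\epsilon=1/n$, $\gamma=n$), and note you do not actually need the optimality of your spoke-only LP point --- feasibility already gives the required upper bound on $\optlp$.
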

\begin{proof}
 For an instance $I$ denote by $\opt(I)$ the value of an optimum (integral) solution and by $\optlp(I)$ the value of an optimum LP-solution. We will provide a sequence $I_k$ of instances, such that $\lim\limits_{k\rightarrow\infty}\frac{\opt(I_k)}{\optlp(I_k)}=3$.
 
 Let $0<\epsilon<\tfrac{1}{2}$. For some $k\geq3$, let $G$ be a $k$-star. That is a graph with $k+1$ vertices $\{C\}\cup\{v_1,\ldots,v_k\}$ and edges $\{\{C,v_i\}\mid i=1,\ldots,k\}$. We set $c\equiv0$ and $\gamma:=1$. For all edges $e\in E(G)$, we set $u(e):=\frac{1}{2}$. Finally, we set $b(C):=1-\epsilon$ and $b(v_i):=\epsilon$ for $i=1,\ldots,k$. In order to get to a complete graph, we extend $G$, by adding edges between all pairs $v_i,v_j$ for $i<j$ and set $c(\{v_i,v_j\})=0$ and $u(\{v_i,v_j\}):=1-\epsilon$. Clearly, the resulting $u$ and $c$ are metric. We will denote this instance by $I_{k,\epsilon}$. A depiction of $I_{k,\epsilon}$ is shown in Figure \ref{fig:integrality_example}.
 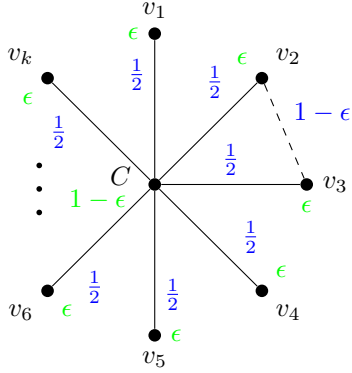
\begin{figure}\centering
 \begin{tikzpicture}
  \node[circle,fill=black,scale=0.5] (C) at (0,0){};
  \node[circle,fill=black,scale=0.5,label=north:$v_1$,label=west:\textcolor{green}{$\epsilon$}] (v1) at (0,2){};
  \node[circle,fill=black,scale=0.5,label=north east:$v_2$,label=north west:\textcolor{green}{$\epsilon$}] (v2) at (1.41,1.41){};
  \node[circle,fill=black,scale=0.5,label=east:$v_3$,label=south:\textcolor{green}{$\epsilon$}] (v3) at (2,0){};
  \node[circle,fill=black,scale=0.5,label=south east:$v_4$,label=north east:\textcolor{green}{$\epsilon$}] (v4) at (1.41,-1.41){};
  \node[circle,fill=black,scale=0.5,label=south:$v_5$,label=east:\textcolor{green}{$\epsilon$}] (v5) at (0,-2){};
  \node[circle,fill=black,scale=0.5,label=south west:$v_6$,label=south east:\textcolor{green}{$\epsilon$}] (v6) at (-1.41,-1.41){};
  \node[circle,fill=black,scale=0.5,label=north west:$v_k$,label=south west:\textcolor{green}{$\epsilon$}] (vk) at (-1.41,1.41){};
  \node[rotate=90,scale=2] at (-1.5,0) {$\cdots$};
  \node at(-0.45,0.12) {$C$};
  \node at(-0.75,-0.2) {\textcolor{green}{$1-\epsilon$}};
  
  \draw (C) -- (v1) node[near end,anchor=east] {\textcolor{blue}{$\frac{1}{2}$}};
  \draw (C) -- (v2) node[near end,anchor=south east] {\textcolor{blue}{$\frac{1}{2}$}};
  \draw (C) -- (v3) node[midway,anchor=south] {\textcolor{blue}{$\frac{1}{2}$}};
  \draw (C) -- (v4) node[near end,anchor=south west] {\textcolor{blue}{$\frac{1}{2}$}};
  \draw (C) -- (v5) node[near end,anchor=west] {\textcolor{blue}{$\frac{1}{2}$}};
  \draw (C) -- (v6) node[near end,anchor=north west] {\textcolor{blue}{$\frac{1}{2}$}};
  \draw (C) -- (vk) node[near end,anchor=north east] {\textcolor{blue}{$\frac{1}{2}$}};
  \draw[dashed] (v2) -- (v3) node[midway,above right] {\textcolor{blue}{$1-\epsilon$}};
 \end{tikzpicture}
 \caption{A picture showing the instance described in the proof of Theorem \ref{sec:integrality_gap}. The solid edges belong to the $k$-star. Edge loads are marked in blue and node loads are marked in green. The dashed edge is an example for the edges added to complete the graph.}
 \label{fig:integrality_example}
 \end{figure}

 In an optimum integral solution to this instance, no edge can be used. This means that $\opt(I_{k,\epsilon})=k+1$. Now we solve the $LP$ using the algorithm from section 2. It will first consider the edges $\{C,v_i\}$ and only afterwards the others. W.l.o.g. we may assume that the edges are considered in the order $e_1:=\{C,v_1\},\ldots,e_k:=\{C,v_k\}$. The first edge will get the value $y(e_1):=2-(1-\epsilon+\epsilon)=1$ and hence \[x(e_1)=\frac{y(e_1)}{1+\frac{1}{2}}=\frac{2}{3}.\] The edge $e_i$ will get the value $y(e_i):=i+1-(i+\epsilon)=1-\epsilon$ and hence \[x(e_i)=\frac{2}{3}-\frac{2\epsilon}{3}      
                                                                                                                                                                                                                                              .\]
 After edge $e_k$ has been considered, the support graph is a tree. This means, that the algorithm will not consider the remaining edges. This shows that \[\optlp(I_{k,\epsilon})=\left|V\right|-\sum\limits_{i=1,\ldots,k}x(e_i)=k+1-\frac{2}{3}-(k-1)\left(\frac{2}{3}-\frac{\epsilon}{3}\right)=\frac{k}{3}+\frac{(k-1)\epsilon}{3}+1.\]
 Setting $I_k:=I_{k,\tfrac{1}{k^2}}$, we get
 \[\lim\limits_{k\rightarrow\infty}\frac{\opt(I_k)}{\optlp(I_k)}=\lim\limits_{k\rightarrow\infty}\frac{k+1}{\frac{k}{3}+\frac{(k-1)}{3k^2}+1}=3\]
\end{proof}
\begin{rem}
 Adding inequalities that forbid edges that can not be taken in an integral solution will not help to lower the integrality gap. We can replace the central vertex $C$ in the proof of Theorem 9 by a $K_l$ for some $l\in\mathbb{N}$ with small edge and vertex loads that add up to $1$. This will only change the ``$+1$'' by a constant depending on $l$ and hence not change the asymptotic ratio of $\opt$ to $\optlp$.
\end{rem}
Together with the upper bound of $3$ given by the analysis of the algorithm, we can conclude:
\begin{cor}
The integrality gap of the LP is $3$.
\end{cor}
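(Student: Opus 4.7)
The plan is to combine the two bounds on the integrality gap that have already been established in the preceding sections, so this corollary is essentially a bookkeeping statement with no remaining combinatorial content.

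First I would invoke the analysis in Section \ref{sec:general_approx}, which shows that applying the rounding-and-splitting procedure with $\alpha := \tfrac{2}{3}$ to \emph{any} feasible LP solution $x,y$ (in particular to an optimal one) produces an integral tree cover of cost at most $3 \cdot \optlp(I)$ on every instance $I$. This gives the upper bound $\opt(I)/\optlp(I) \leq 3$ for all $I$, and hence an upper bound of $3$ on the integrality gap. Second, I would invoke the lower bound just proved in Theorem 9: the explicit sequence $I_k := I_{k,1/k^2}$ satisfies $\lim_{k\to\infty} \opt(I_k)/\optlp(I_k) = 3$, so the supremum of $\opt(I)/\optlp(I)$ over all instances is at least $3$.

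Together these two bounds pin the integrality gap to exactly $3$. The main (and only) obstacle is stylistic rather than mathematical: making sure the corollary is phrased so that it correctly references both the rounding analysis (which bounds $\opt(I)$ by $3 \cdot \optlp(I)$ constructively) and the asymptotic construction of Theorem 9. No additional calculation or argument is required beyond citing these two results.
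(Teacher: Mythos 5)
Your argument is correct and is exactly what the paper does: the rounding-and-splitting analysis with $\alpha = \tfrac{2}{3}$ gives $\opt(I) \leq 3\cdot\optlp(I)$ for every instance, and Theorem 9 exhibits instances $I_k$ with $\opt(I_k)/\optlp(I_k) \to 3$, so the gap is exactly $3$. Nothing further is needed.
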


\printbibliography

\end{document}